\newtheorem{theorem}{Theorem}[section]
\newtheorem{lemma}[theorem]{Lemma}
\theoremstyle{definition}
\newtheorem{assumption}[theorem]{Assumption}
\theoremstyle{remark}
\numberwithin{equation}{section}
\begin{document}

\title{Vortices and Cosmic Strings in a Generalized Born-Infeld-Higgs Model}

\author{Kai Shao}

\email{kaishao@nyu.edu}

\begin{abstract}
In this paper, we consider two types of topological solitons in a generalized Born-Infeld-Higgs model. We explore the self-dual structure of the model and prove the existence of planar vortex solutions. Furthermore, we couple the system with the Einstein equations and study the cosmic strings problem over $\mathbb R^{1,1}\times S$, where $S$ is a Riemann surface. We prove the existence of cosmic string solutions when $S$ is noncompact.  We also discuss the decay estimates for vortices and cosmic strings at infinity and show that the minimal energy is quantized and depends on the number of vortices and strings, respectively.

\end{abstract}

\maketitle
\section{Introduction}
The Born-Infeld theory has long been an important nonlinear model in electrodynamics. It was proposed by M. Born and L. Infeld \cite{bf, bf2} to overcome the problem of the infinite self-energy of a point charge. In the model, the action density of the classical Maxwell theory
\begin{equation}
\mathcal L=\frac{1}{2}(E^2-B^2)
\end{equation}
is replaced by
\begin{equation}
\mathcal L=b^2\Big(1-\sqrt{1-\frac{1}{b^2}(E^2-B^2)}\Big)
\end{equation}
where $E$ and $B$ are electric and magnetic fields. The positive parameter $b>0$ is called the Born-Infeld parameter and is assumed to be sufficiently large. The classical Maxwell theory can be recovered from the Born-Infeld theory in the limiting case. The action density of the Abelian Born-Infeld-Higgs model over the $(3+1)$-dimensional Minkowski spacetime is defined to be
\begin{equation}
\mathcal L=b^2\Big(1-\sqrt{1+\frac{1}{2b^2}F_{\mu\nu}F^{\mu\nu}}\Big)+D_{\mu}\phi\overline{D^{\mu}\phi}-V(|\phi|^2)
\end{equation}
where $F_{\mu\nu}=\partial_{\mu}A_{\nu}-\partial_{\nu}A_{\mu}$ is the electromagnetic field induced from the gauge field $A_{\mu}$, $\phi$ is a complex scalar field, $D_{\mu}\phi=\partial_{\mu}\phi-\mathrm{i}A_\mu\phi$ is the gauge-covariant derivative, $V$ is the Higgs potential function, and $\{g_{\mu\nu}\}$ is the Minkowski metric of signature $(+ - - -)$. The Born-Infeld theory has not only been one of the most important nonlinear electromagnetic models but also has close connections to many fields in physics, such as membrane theory \cite{by, gb1}, cosmology, and string theory \cite{gb2, gb3}. The effectiveness of removing singularity in electromagnetism also inspires the application of Born-Infeld type structure in the theory of gravity to avoid the spacetime singularity \cite{bj, yang21, yang22}.  From a PDE point of view, the Born-Infeld model also shows its relevance in maximal hypersurface equations \cite{linyang, yang00} and magnetohydrodynamics equations \cite{by, by2}. Because of its wide connection to mathematics and physics, the model has been extensively studied during the past several decades. Soliton-like solutions \cite{m, td, yang01} are of great interest among these studies. In \cite{linyang, yang00}, the authors rigorously studied the vortices and strings generated from the Born-Infeld electromagnetism. As the Euler-Lagrange equation of the Born-Infeld energy functional is a nonlinear second-order system, which is usually hard to solve, the study of these soliton-like solutions always leads to many fascinating problems in nonlinear partial differential equations. The seminal work of Taubes \cite{taubes, t2} mathematical rigorously studied the vortex of the two-dimensional Abelian Higgs model using the BPS reduction \cite{bog}, which provides an effective way to circumvent such difficulties. However, to achieve such a reduction, the model is required to have the self-dual structure, which restricts the flexibility of the Higgs potential function. Many generalized models were proposed to realize a less restrictive Higgs self-interaction pattern, such as the generalized Abelian-Higgs model \cite{gab, lohe81} and the generalized Chern-Simons-Higgs model \cite{gcs, leenam}. On the other hand, it is natural to see whether a generalized model maintains the self-dual structure. In this paper, we investigate a generalized Born-Infeld model proposed in \cite{gbf}. By introducing modifying functions into the original model, the action density of the generalized Born-Infeld model is defined to be
\begin{equation}\label{ac}
\mathcal L_{GBI}=b^2\big(1-\sqrt{1+\frac{G(|\phi|^2)}{2b^2}F_{\mu\nu}F^{\mu\nu}} \big)+w(|\phi|^2)D_\mu \phi\overline {D^{\mu} \phi} -V(|\phi|^2)
\end{equation}
where the modifying functions $G(|\phi|^2)$ and $w(|\phi|^2)$ are positive functions. In \cite{gbf}, by choosing specific $G(|\phi|^2)$ and $w(|\phi|^2)$, the authors studied the radial symmetric self-dual vortices solutions and presented several numerical results. Consequently, in \cite{han16}, with the same form of $G(|\phi|^2)$ and $w(|\phi|^2)$, the author established the existence of the self-dual vortices on a doubly periodic domain and full plane without the assumption of radial symmetry. Inspired by these results, we consider vortices and cosmic strings arising in the generalized Born-Infeld model (\ref{ac}) with a more general class of modifying functions. The Euler-Lagrange equations of  (\ref{ac}) is:
\begin{equation}\label{eq1}
\frac{1}{\sqrt{|g|}}\partial_{\mu}\Big(\frac{G(|\phi|^2)g^{\mu\mu'}g^{\alpha\nu'}\sqrt{|g|}F_{\mu'\nu'}}{\sqrt{1+\frac{G(|\phi|^2)}{2b^2}g^{\mu\mu'}g^{\nu\nu'}F_{\mu\nu}F_{\mu'\nu'}}}\Big)=j^{\alpha}
\end{equation}

\begin{equation}\label{eq2}
\frac{1}{\sqrt{|g|}}D_{\mu}\Big(w(|\phi|^2)g^{\mu\nu}\sqrt{|g|}D_{\nu}\phi\Big)=f
\end{equation}

where 
\[j^{\alpha}=\mathrm{i}w(|\phi|^2)g^{\alpha\mu}(\overline\phi D_\mu \phi-\phi\overline{D_{\mu}\phi}))\]
\[f=\Big(\frac{G'(|\phi|^2)g^{\mu\mu'}g^{\nu\nu'}F_{\mu\nu}F_{\mu'\nu'}}{4\sqrt{1+\frac{G(|\phi|^2)}{2b^2}g^{\mu\mu'}g^{\nu\nu'}F_{\mu\nu}F_{\mu'\nu'}}}+w'(|\phi|^2)D_{\mu}\phi\overline{D_{\nu}\phi}g^{\mu\nu}-V'(|\phi|^2)\Big)\phi\]
are current and force densities. We study the vortex solutions of the system (\ref{eq1}), (\ref{eq2}) under mild conditions on modifying functions $G(|\phi|^2)$ and $w(|\phi|^2)$ instead of using specific forms. Moreover, we consider the cosmic strings by coupling (\ref{eq1}), (\ref{eq2}) with the Einstein equations 
\begin{equation}\label{eq3}
G_{\mu\nu}=-8\pi GT_{\mu\nu}
\end{equation}
where the constant $G$ is Newton's gravitational constant and $G_{\mu\nu}=R_{\mu\nu}-\frac{1}{2}g_{\mu\nu}R$. Here, $R_{\mu\nu}$ and $R$ are Ricci tensor and scalar curvature, respectively.
The energy-momentum tensors $T_{\alpha\beta}$ are defined by
\begin{equation}
T_{\alpha\beta}=-\frac{G(|\phi|^2)g^{\mu'\nu'}F_{\alpha\mu'}F_{\beta\nu'}}{\sqrt{1+\frac{G(|\phi|^2)}{2b^2}g^{\mu\mu'}g^{\nu\nu'}F_{\mu\nu}F_{\mu'\nu'}}}+w(|\phi|^2)(D_{\alpha}\phi\overline{D_{\beta}\phi}+\overline{D_{\alpha}\phi}D_{\beta}\phi)-g_{\alpha\beta}\mathcal L_{GBI}
\end{equation}
where $\alpha$, $\beta=0,1,2,3$. Our main results are as follows:

(1) With new relations between $G(|\phi|^2)$, $w(|\phi|^2)$, and $V(|\phi|^2)$, we re-derive the BPS reduction for the generalized Born-Infeld model (\ref{ac}). As a result, we get the self-dual equations of vortices and cosmic strings, respectively. Such a reduction also includes several topological quantities crucial for total energy.

(2) We give assumptions on modifying functions $G(|\phi|^2)$ and $w(|\phi|^2)$ and prove the existence of self-dual vortices on the full plane under such assumptions. We give decay estimates for the solution and show that the minimal energy of (\ref{ac}) is quantized.

(3) With suitable assumptions on modifying functions and an upper bound of the number of strings, we establish the existence of cosmic strings when the underlying domain is diffeomorphic to $\mathbb R^2$ and derive the decay estimates. We also calculate the minimal energy and the total curvature. 

In the next section, we give the conditions under which the model (\ref{ac}) can achieve the self-dual structure and derive the self-dual system by BPS reduction. We then further reduce the system to an elliptic problem that governs the existence of vortices. In section 3, the existence result for vortex solutions over $\mathbb R^2$ is proved by the method of subsolutions and supersolutions. We then discuss the decay estimates for the vortex solution and show that the minimal energy of the (\ref{ac}) is quantized and controlled by the number of vortices. In section 4, we introduce the cosmic strings problem by coupling the Euler-Lagrange equations (\ref{eq1}), (\ref{eq2}) with the Einstein equations (\ref{eq3}). Similar to the vortex solution case, we reduce the system to a governing elliptic equation and focus on the case when the underlying domain is $\mathbb R^{1,1}\times S$, where $S$ is a Riemann surface. In section 5, we prove the existence of the cosmic strings when $S$ is diffeomorphic to $\mathbb R^2$. We first consider a family of regularized equations of the governing elliptic equation with a regularizing parameter and prove the existence of the solution to the regularized equation. We then show that the solution of the regularized equation converges to the solution of the original equation, which proves the existence of the cosmic strings. Decay estimates for the solution and quantized energy are also discussed. We conclude the paper by discussing some future directions and open problems.
\section{Self-Dual Vortices}
In this section, we first explore the self-dual structure in the generalized Born-Infeld model (\ref{ac}). We focus on the case when the Minkowski spacetime is of type $\mathbb R^{1,1}\times S$. Here $(S,(g_{ij}))$ is a Riemann surface with metric $(g_{ij})$. We are interested in the static field configurations $(A,\phi)$ of the system (\ref{eq1}), (\ref{eq2}) which depend only on variables $x_1$, $x_2$ on $S$, and $A_0=A_3=0$. In an isothermal coordinate, we shall assume the metric is of the form $$(g_{\mu\nu})=\mathrm{diag}\{1, -\mathrm{e}^{\eta},  -\mathrm{e}^{\eta}, -1\}$$With these assumptions, we can easily see that $F_{12}$ and $F_{21}$ are the only two nonzero terms in $F_{\mu\nu}$ and 
\begin{equation}
D_{\mu}\phi=0,\  \mu=0,3
\end{equation} 
The action density (\ref{ac}) can be simplified as
\begin{equation}
\mathcal L_{GBI}=b^2\Big(1-\sqrt{1+\frac{G(|\phi|^2)\mathrm{e}^{-2\eta}F_{12}^2}{b^2}}\Big)-\mathrm{e}^{-\eta}w(|\phi|^2)(|D_1\phi|^2+|D_2\phi|^2)-V(|\phi|^2)
\end{equation} 
and the energy density $\mathcal H=T_{00}$ can be written as
\begin{equation}
\begin{aligned}
\mathcal H&=T_{00}\notag\\
&=b^2\Big(\sqrt{1+\frac{G(|\phi|^2)\mathrm{e}^{-2\eta}F_{12}^2}{b^2}}-1\Big)+\mathrm{e}^{-\eta}w(|\phi|^2)(|D_1\phi|^2+|D_2\phi|^2)+V(|\phi|^2)\notag\\
&=b^2\Big(\sqrt{1+\frac{G(|\phi|^2)\mathrm{e}^{-2\eta}F_{12}^2}{b^2}}-1\Big)\notag\\
&\quad+\mathrm{e}^{-\eta}w(|\phi|^2)\Big(|D_1\phi\pm\mathrm{i}D_2\phi|^2 \pm\mathrm{i}(D_1\phi\overline{D_2\phi}-\overline{D_1\phi}D_2\phi)\Big)+V(|\phi|^2)
\end{aligned}
\end{equation}
the last equality is due to the identity:
\[|D_1\phi|^2+|D_2\phi|^2=|D_1\phi\pm\mathrm{i}D_2\phi|^2 \pm\mathrm{i}(D_1\phi\overline{D_2\phi}-\overline{D_1\phi}D_2\phi)\]
Let 
\begin{equation}\label{simp2}
F=\sqrt{1+\frac{G(|\phi|^2)\mathrm{e}^{-2\eta}F_{12}^2}{b^2}},\quad U=\sqrt{1-\frac{U^2(|\phi|^2)}{b^2}}
\end{equation}
then we have
\begin{equation}
\begin{aligned}
\mathcal H=&\frac{1}{2}(\mathrm{e}^{-\eta}F_{12}\sqrt{G(|\phi|^2})\pm FU(|\phi|^2)^2 F^{-1}+\frac{b^2}{2}(FU-1)^2F^{-1}\notag\\
&-b^2\mp \mathrm{e}^{-\eta}F_{12}\sqrt{G(|\phi|^2})U(|\phi|^2)+b^2U\notag\\
&+\mathrm{e}^{-\eta}w(|\phi|^2)\Big(|D_1\phi\pm\mathrm{i}D_2\phi|^2 \pm\mathrm{i}(D_1\phi\overline{D_2\phi}-\overline{D_1\phi}D_2\phi)\Big)+V(|\phi|^2)\notag
\end{aligned}
\end{equation}
To derive the self-dual reduction, we introduce the current density 
\begin{equation}
J_k=\mathrm{i}g(|\phi|^2)(\phi\overline{D_k\phi}-\overline\phi D_k\phi)
\end{equation}
then we have
\begin{equation}
\begin{aligned}
J_{12}=\partial_1J_2-\partial_2J_1&=\partial_1\Big(g(|\phi|^2)\mathrm{i}(\phi\overline{D_k\phi}-\overline\phi D_k\phi)\Big)-\partial_2(g(|\phi|^2)\mathrm{i}(\phi\overline{D_k\phi}-\overline\phi D_k\phi))\notag\\
&=2\mathrm{i}\Big(g'(|\phi|^2)|\phi|^2+g(|\phi|^2)\Big)(D_1\phi\overline{D_2\phi}-\overline{D_1\phi}D_2\phi)-2g(|\phi|^2)|\phi|^2F_{12}\notag
\end{aligned}
\end{equation}
and the energy density can be rewritten as
\begin{align}
\mathcal H&=\frac{1}{2}(\mathrm{e}^{-\eta}F_{12}\sqrt{G(|\phi|^2})\pm FU(|\phi|^2)^2 F^{-1}+\frac{b^2}{2}(FU-1)^2F^{-1}\notag\\
&\quad-b^2\mp \mathrm{e}^{-\eta}F_{12}\sqrt{G(|\phi|^2})U(|\phi|^2+b^2U\notag\\
&\quad+\mathrm{e}^{-\eta}w(|\phi|^2)|D_1\phi^2\pm\mathrm{i}D_2\phi|^2 \pm\mathrm{e}^{-\eta}w(|\phi|^2)\frac{J_{12}+2g(|\phi|^2)|\phi|^2F_{12}}{2\Big(g'(|\phi|^2)|\phi|^2+g(|\phi|^2)\Big)}+V(|\phi|^2)\notag\\
&=\frac{1}{2}(\mathrm{e}^{-\eta}F_{12}\sqrt{G(|\phi|^2)}\pm FU(|\phi|^2))^2 F^{-1}+\frac{b^2}{2}(FU-1)^2F^{-1}-b^2+b^2U+V(|\phi|^2)\notag\\
&\quad+\mathrm{e}^{-\eta}w(|\phi|^2)|D_1\phi\pm\mathrm{i}D_2\phi|^2\pm\frac{\mathrm{e}^{-\eta}w(|\phi|^2)}{2\Big(g'(|\phi|^2)|\phi|^2+g(|\phi|^2)\Big)}J_{12}\notag\\
&\quad+\Big(\pm\frac{2\mathrm{e}^{-\eta}w(|\phi|^2)g(|\phi|^2)|\phi|^2}{2\Big(g'(|\phi|^2)|\phi|^2+g(|\phi|^2)\Big)}\mp\mathrm{e}^{-\eta}\sqrt{G(|\phi|^2)}U(|\phi|^2)\Big)F_{12}\notag
\end{align}
If we assume that 
\begin{equation}\label{as0}
V(|\phi|^2)=b^2(1-U)
\end{equation}
\begin{equation}
w(|\phi|^2)=2\Big(g'(|\phi|^2)|\phi|^2+g(|\phi|^2)\Big)\label{as1}
\end{equation}
\begin{equation}
2g(|\phi|^2)|\phi|^2-\sqrt{G(|\phi|^2)}U(|\phi|^2)=1\label{as2}
\end{equation}
then $\mathcal H$ can be simplified to be:
\begin{align}\label{sph}
\mathcal H&=\frac{1}{2}(\mathrm{e}^{-\eta}F_{12}\sqrt{G(|\phi|^2})\pm FU^2(|\phi|^2))^2 F^{-1}+\frac{b^2}{2}(FU-1)^2F^{-1}\notag\\
&+\mathrm{e}^{-\eta}w(|\phi|^2)|D_1\phi\pm\mathrm{i}D_2\phi|^2\pm\mathrm{e}^{-\eta}(F_{12}+J_{12})
\end{align}
Therefore we have 
\begin{equation}
\mathcal H\ge\pm\mathrm{e}^{-\eta}(F_{12}+J_{12})
\end{equation}
and the lower bound of the total energy
 \begin{equation}
 E=\int_S\mathcal H\mathrm{d}\Omega_S
 \end{equation}
is attained if and only if all the squared terms in (\ref{sph}) vanish. Therefore, we get the following self-dual system:
\begin{equation}
\mathrm{e}^{-\eta}F_{12}\sqrt{G(|\phi|^2})\pm FU(|\phi|^2)=0\label{bps1}
\end{equation}
\begin{equation}
FU-1=0\label{bps2}
\end{equation}
\begin{equation}
D_1\phi\pm\mathrm{i}D_2\phi=0\label{bps3}
\end{equation}

The solution of the system is then an energy minimizer and satisfies the original Euler-Lagrange Equations (\ref{eq1}), (\ref{eq2}). Moreover, we can see that (\ref{bps2}) can be derived from (\ref{bps1}). Therefore, solving the system (\ref{bps1}), (\ref{bps2}) and (\ref{bps3}) is equivalent to solving equation (\ref{bps1}) and (\ref{bps3}). As a consequence, the main equations of the self-dual system can be written as:
\begin{equation}
F_{12}\pm\frac{\mathrm{e}^{\eta}U(|\phi|^2)}{\sqrt{G(|\phi|^2)}\sqrt{1-\frac{1}{b^2}U^2(|\phi|^2)}}=0\label{mainbps1}
\end{equation}

\begin{equation}
D_1\phi\pm\mathrm{i}D_2\phi=0\label{mainbps2}
\end{equation}

Next, we further reduce the self-dual systems by Taubes' trick \cite{taubes}. Without loss of generality, we choose the plus sign in the equations. We are interested in the field configurations $(A, \phi)$ with prescribed zeros of $\phi$, which represents the location of vortices. Let $v=\ln |\phi|^2$, $z=x_1+\mathrm{i}x_2$, and set $P=\{p_1, p_2, ..., p_N\}$ be the set of distinct zeros of $\phi$.  Recall that 
\[\partial=\frac{1}{2}(\partial_1-\mathrm{i}\partial_2)\quad \overline{\partial}=\frac{1}{2}(\partial_1+\mathrm{i}\partial_2)\] 
then from (\ref{mainbps2}) we have
\begin{equation}\label{fml}
\begin{aligned}
\phi(z)&=\mathrm{e}^{v(z)/2+\mathrm{i}(\sum_{s=1}^N\mathrm{arg}(z-p_s))}\\
A_1(z)&=-\Re\{2\mathrm{i}\overline\partial\ln\phi(z)\}\\
A_2(z)&=-\Im\{2\mathrm{i}\overline\partial\ln\phi(z)\}\\
\end{aligned}
\end{equation}
and away from the set $P$ we have
\[F_{12}=-\frac{1}{2}\Delta\ln|\phi|^2\]
Let $v=\ln|\phi|^2$, then the equations (\ref{mainbps1}) and (\ref{mainbps2}) can be reduced to the following single elliptic equation: 
\begin{equation}
\Delta_{g}v=\frac{2U(\mathrm{e}^v)}{\sqrt{G(\mathrm{e}^v)}\sqrt{1-\frac{1}{b^2}U^2(\mathrm{e}^v)}}+4\pi\sum_{i=1}^{N}\delta_{p_i}\label{generalellip}
\end{equation}
where $P=\{p_1,p_2,...,p_N\}$ are zeros of $\phi$, namely the vortices, and $\Delta_{g}$ is the Laplace-Beltrami operator. The solution $v$ to the equation (\ref{generalellip}) can recover the solution $\phi$ and $A_\mu$ of the original system (\ref{eq1}), (\ref{eq2}) by the formulae (\ref{fml}). We are interested in the case when $(S, g_{ij})=(\mathbb R^2, \delta_{ij})$ then the equation (\ref{generalellip}) reads as
\begin{equation}
\Delta v=\frac{2U(\mathrm{e}^v)}{\sqrt{G(\mathrm{e}^v)}\sqrt{1-\frac{1}{b^2}U^2(\mathrm{e}^v)}}+4\pi\sum_{i=1}^{N}\delta_{p_i}\label{planar1}
\end{equation}
Here $G(t)$ is the modifying function from the generalized model (\ref{ac}) and $U(t)$ is a function arises from the self-dual reduction (\ref{simp2}). From (\ref{as1}) and (\ref{as2}) we see that the functions $G(t)$ and $U(t)$ are not arbitrary. To establish the existence of vortex solutions of (\ref{planar1}), we further assume that $G(t)$ and $U(t)$ satisfy the following conditions:
\begin{assumption}\label{as3}
$U(t)$ is a $C^2$ function in $\mathbb R$ with $U(1)=0$ and $U'(t)>0$. 
\end{assumption}
\begin{assumption}\label{as4}
G(t) is a continuous positive function in $\mathbb R$ and there exist $m_G$, $M_G$ such that $0<m_G<\max_{0\le t\le1}G(t)<M_G$.
\end{assumption}
With above assumptions, we have the following existence result:
\begin{theorem}\label{thm1}
Given N prescribed points $p_1$, $p_2$,..., $p_N$ in $\mathbb R^2$, the equations (\ref{eq1}) (\ref{eq2}) have an energy-minimizing solution $(\phi, A)$ with the asymptotic limit $|\phi|=1$ at infinity so that $p_1$, $p_2$,..., $p_N$ are zeros of $\phi$ provided $G(t)$ and $U(t)$ satisfy assumption \ref{as3} and assumption \ref{as4}. Moreover, we have the decay estimates
\begin{equation}
|\phi|^2-1=O(\mathrm{e}^{-\sqrt{M}|x|}),\quad F_{12}=O(\mathrm{e}^{-\sqrt{M}|x|}),\quad |D_\mu\phi|=O(\mathrm{e}^{-\sqrt{M}|x|}),\mu=1,2
\end{equation}
where $M=\frac{2U'(1)}{\sqrt{G(1)}}$, and the minimal energy is quantized with the value: 
\begin{equation}
E=\int_{\mathbb R^2}\mathcal H=\Big|\int_{\mathbb R^2} F_{12}\Big|=2\pi N
\end{equation}
\end{theorem}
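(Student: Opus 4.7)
My plan is to solve the equivalent scalar elliptic equation (\ref{planar1}) and then recover the pair $(\phi,A)$ via the formulas (\ref{fml}). The Dirac sources in (\ref{planar1}) are removed by Taubes' background trick: choose a smooth function $u_{0}$ on $\mathbb{R}^{2}$, vanishing at infinity, with $\Delta u_{0}=4\pi\sum_{i=1}^{N}\delta_{p_{i}}-h(x)$, where $h\geq 0$ is smooth and rapidly decreasing; a typical choice is $u_{0}(x)=\sum_{i}\ln\bigl(|x-p_{i}|^{2}/(\mu+|x-p_{i}|^{2})\bigr)$, yielding $h(x)=\sum_{i}4\mu/(\mu+|x-p_{i}|^{2})^{2}$ with $\mu>0$ a scale parameter I reserve the right to tune. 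Writing $v=u_{0}+u$ converts (\ref{planar1}) into the smooth semilinear equation
\begin{equation*}
\Delta u=F(u_{0}+u)+h(x),\qquad F(\tau):=\frac{2U(\mathrm{e}^{\tau})}{\sqrt{G(\mathrm{e}^{\tau})}\,\sqrt{1-\tfrac{1}{b^{2}}U^{2}(\mathrm{e}^{\tau})}}.
\end{equation*}
Assumptions \ref{as3} and \ref{as4}, together with largeness of the Born--Infeld parameter $b$, make $F$ a $C^{1}$ strictly increasing function on the relevant range with $F(0)=0$ and $F(\tau)<0$ for $\tau<0$; this monotonicity is the structural input that drives the sub/super-solution iteration.

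I would carry out existence by a bounded-domain approximation. On each ball $B_{R}$ with Dirichlet data $u=0$ on $\partial B_{R}$, a supersolution is provided by $\overline{u}\equiv 0$: with $\mu$ chosen small enough, the positive source $h$ dominates $|F(u_{0})|$ pointwise, so $F(u_{0})+h\geq 0=\Delta\overline{u}$. A subsolution $\underline{u}_{R}$ is produced by taking a sufficiently negative constant (possible because $F$ is bounded below on $(-\infty,0]$ for $b$ large, under Assumption~\ref{as3}) or, more flexibly, the Newton potential of a suitable negative constant source on $B_{R}$. Since $\overline{u}\geq\underline{u}_{R}$ and $F$ is monotone, the classical monotone iteration produces a solution $u_{R}\in C^{2}(B_{R})$ with $\underline{u}_{R}\leq u_{R}\leq 0$. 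Interior elliptic estimates give uniform $W^{2,p}_{\mathrm{loc}}$ bounds, and a diagonal extraction across $R\to\infty$ yields a solution $u\in C^{2}(\mathbb{R}^{2})$ of the full-plane equation with $v=u_{0}+u\leq 0$; the decay of $h$ and $u_{0}$ at infinity together with the exterior maximum principle force $v\to 0$ at infinity, equivalent to $|\phi|\to 1$.

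The decay estimates come from linearising $F$ around the vacuum: $F(\tau)=M\tau+O(\tau^{2})$ with $M=2U'(1)/\sqrt{G(1)}>0$. Outside a large ball, $v$ satisfies $\Delta v\geq (M-\varepsilon)v$ for any small $\varepsilon>0$, so comparison with the radial supersolution $C_{\varepsilon}\mathrm{e}^{-\sqrt{M-\varepsilon}\,|x|}$, followed by $\varepsilon\to 0$, produces $|v|=O(\mathrm{e}^{-\sqrt{M}|x|})$. The decay of $F_{12}$ then follows from (\ref{mainbps1}) and that of $D_{\mu}\phi$ from (\ref{mainbps2}) combined with a pointwise bound of $|D_{\mu}\phi|$ in terms of $|\nabla v|$ away from the vortices. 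For the quantization of energy, at a solution all squared terms in the BPS identity (\ref{sph}) vanish, so $E=\int_{\mathbb{R}^{2}}(F_{12}+J_{12})\,dx$; the first integral equals $2\pi N$ because it is $2\pi$ times the degree of $\phi/|\phi|$ on a large circle (equivalently, a direct computation via (\ref{mainbps1}) and the local form $\ln|\phi|^{2}\sim 2\ln|x-p_{i}|$), while $\int_{\mathbb{R}^{2}}J_{12}\,dx=0$ because $J_{12}$ is a total divergence and the associated flux through $|x|=R$ tends to zero by the decay estimates.

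The principal obstacle is the global construction of a compatible sub/supersolution pair: the double square root in $F$ couples $G$, $U$, and $b$ in a non-trivial way, and the admissibility constraint $|U(\mathrm{e}^{v})|<b$ must be preserved throughout the iteration. These difficulties are handled uniformly once $\mu$ and $b$ are tuned so that the entire iteration stays in a compact range $[\,\min u_{0}-C,\,0\,]$ on which $F$ is Lipschitz. A secondary issue is controlling the approximate solutions $u_{R}$ uniformly in $R$; the monotone structure of the iteration together with integrability of $h$ supplies the bounds needed to pass to the limit.
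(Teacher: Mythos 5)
Your overall strategy---reduce to the scalar equation (\ref{planar1}), remove the Dirac masses with a background function, run a monotone sub/supersolution iteration, linearize at the vacuum for the decay, and compute the flux for the quantization---is the same as the paper's, and your decay and quantization arguments essentially coincide with the paper's own. However, the existence step has a genuine gap: neither of your proposed barriers satisfies the required differential inequality, and no tuning of $\mu$ can make them do so. For the supersolution you need $0=\Delta\overline u\le F(u_0)+h$ throughout $B_R$ for arbitrarily large $R$, i.e. $h\ge |F(u_0)|$ pointwise. But both sides scale \emph{linearly} in $\mu$ while decaying at different rates: for large $|x|$ one has $u_0(x)=-\sum_i\ln\bigl(1+\mu|x-p_i|^{-2}\bigr)\approx -N\mu|x|^{-2}$, hence $|F(u_0)|\approx MN\mu|x|^{-2}$ with $M=F'(0)=2U'(1)/\sqrt{G(1)}>0$, whereas $h(x)\approx 4N\mu|x|^{-4}$. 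Thus $F(u_0)+h<0$ outside a ball of radius roughly $2/\sqrt{M}$, \emph{for every} $\mu>0$; shrinking $\mu$ rescales both sides and repairs nothing. The subsolution fares no better: since $U$ is bounded on $[0,1]$, $F$ is bounded below on $(-\infty,0]$ by $-C_0$ with $C_0=\sup_{0\le t\le 1}2|U(t)|/\bigl(\sqrt{G(t)}\sqrt{1-U^2(t)/b^2}\bigr)$, so a constant $\underline u=-K$ satisfies $0\ge F(u_0-K)+h$ only where $h\le C_0$; but $h(p_i)\ge 4/\mu$, so for the small $\mu$ your supersolution step demands (and even for $\mu=1$, unless $C_0\ge 4$, which Assumptions \ref{as3}--\ref{as4} do not guarantee) no constant subsolution exists---the two requirements on $\mu$ pull in opposite directions. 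The Newton-potential variant works only on a fixed ball and degenerates like $-AR^2$ at the center as $R\to\infty$, which is why your ``secondary issue'' of uniform-in-$R$ bounds is not secondary: without a global lower barrier independent of $R$, nothing prevents $u_R$ from drifting to $-\infty$ locally, and the diagonal limit may fail to solve the equation with the right boundary behavior.

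What is missing is precisely the paper's key device: take as barriers not explicit functions but the Taubes solutions $u_\lambda$ of the Abelian--Higgs-type equation (\ref{abh2}), $\Delta u_\lambda=\lambda(\mathrm{e}^{u_\lambda+u_0}-1)+g$, which exist, vanish at infinity, and are ordered in $\lambda$. Writing $U(\mathrm{e}^{v})=U'(\xi)(\mathrm{e}^{v}-1)$ by the mean value theorem and using $m_U\le U'\le M_U$, $m_G\le G\le M_G$, the nonlinearity of (\ref{goveq11}) is sandwiched between two Abelian--Higgs nonlinearities, so that $u_{\lambda_1}$ with $\lambda_1=2m_U/\sqrt{M_G}$ is a subsolution and $u_{\lambda_2}$ with $\lambda_2=2M_U/\sqrt{m_G(1-U^2(0)/b^2)}$ is a supersolution. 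These barriers have the correct structure both near the vortex points and at infinity (they already solve an equation with the same singular data and the same boundary condition), so the monotone iteration closes on all of $\mathbb R^2$ and automatically delivers $u\to 0$ at infinity. If you wish to retain your bounded-domain scheme, you must replace your barriers by ones of this kind; as written, the iteration never gets started.
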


\section{Existence of Self-Dual Vortices and Decay Estimates in $\mathbb R^2$}
In this section, we prove the Theorem \ref{thm1} through a study of the governing elliptic equation derived from the previous section
\begin{equation}\label{goveq1}
\Delta v=\frac{2U(\mathrm{e}^v)}{\sqrt{G(\mathrm{e}^v)}\sqrt{1-\frac{1}{b^2}U^2(\mathrm{e}^v)}}+4\pi\sum_{i=1}^{N}\delta_{p_i}
\end{equation}
as the existence of the field configuartions $(\phi, A)$ is equivalent to the existence of the solution $v$ to the equation (\ref{goveq1}), and the boundary condition
\begin{equation}
|\phi|\rightarrow1\quad\quad as\quad |x|\rightarrow\infty\label{bdy1}
\end{equation}
implies
\begin{equation}
v\rightarrow0\quad\quad as\quad |x|\rightarrow\infty\label{bdy1}
\end{equation}
First, from maximum principle we see that $v<0$. To remove the singular terms in the equation (\ref{goveq1}), we introduce the following background functions: 
\begin{equation}\label{bk}
\begin{aligned}
u_0=\sum_{s=1}^N \ln(\frac{|x-p_s|^2}{1+|x-p_s|^2}) \\
g=\sum_{s=1}^N\frac{4}{(1+|x-p_s|^2)^2}
\end{aligned}
\end{equation}
and we have
\begin{equation}
\begin{aligned}
\Delta u_0&=4\pi\sum_{s=1}^N\delta_{p_s}-g\\
\end{aligned}
\end{equation}
Let $u=v-u_0$, then the equation (\ref{goveq1}) becomes
\begin{equation}\label{goveq11}
\Delta u=\frac{2U(\mathrm{e}^{u+u_0})}{\sqrt{G(\mathrm{e}^{u+u_0})}\sqrt{1-\frac{1}{b^2}U^2(\mathrm{e}^{u+u_0})}}+g
\end{equation}

We prove the existence of solution to (\ref{goveq11}) via the method of subsolutions and supersolutions, then we can get the solution $v$ of equation (\ref{planar1}) by $v=u+u_0$. We first find the subsolutions and supersolutions of (\ref{goveq11}) by considering a family of equations of the following type:
\begin{equation}
\Delta v_{\lambda}=\lambda(\mathrm{e}^{v_{\lambda}}-1)+4\pi\sum_{i=1}^{N}\delta_{p_i}\label{abh}
\end{equation}
\begin{equation}
v_\lambda\rightarrow0\quad\quad as\quad |x|\rightarrow\infty
\end{equation}
This type of equations has been studied extensively due to its close connection to Abelian Higgs model and prescirbed Gauss curvature problem. It is well known that the equation has a unique solution for any $\lambda>0$ \cite{taubes}. Using the background functions again, we can rewrite the equation (\ref{abh}) as
\begin{equation}
\Delta u_\lambda=\lambda(\mathrm{e}^{u_\lambda+u_0}-1)+g\label{abh2}
\end{equation}
where $u_{\lambda}=v_\lambda-u_0$. By choosing suitable value of $\lambda$, we can get the subsolutions and supersolutions of the equation (\ref{goveq11}), and we have the following existence result:
\begin{theorem}
Let $\lambda_1=\frac{2m_U}{\sqrt{M_G}}$, $\lambda_2=\frac{2M_U}{\sqrt{m_G(1-U^2(0)/b^2})}$, then $u_{\lambda_1}$ is a subsolution of the equation (\ref{goveq11}), and $u_{\lambda_2}$ is a supersolution of the equation (\ref{goveq11}). Therefore the equation (\ref{goveq11}) has a solution $v$ with $v=0$ at infinity. 
\end{theorem}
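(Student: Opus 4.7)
The plan is to establish two scalar pointwise comparisons between the nonlinearities of \eqref{abh2} and \eqref{goveq11}, and then invoke the standard sub- and supersolution method on $\mathbb{R}^2$. The maximum principle applied to \eqref{abh} forces $v_{\lambda_j}\leq 0$ for $j=1,2$, so that $t_j:=\mathrm{e}^{u_{\lambda_j}+u_0}=\mathrm{e}^{v_{\lambda_j}}\in(0,1]$ pointwise; by Assumption~\ref{as3}, $U(t_j)\leq U(1)=0$. Consequently $u_{\lambda_1}$ is a subsolution of \eqref{goveq11} precisely when the scalar inequality
\[
\lambda_1(t-1)\,\geq\,\frac{2U(t)}{\sqrt{G(t)}\,\sqrt{1-U^2(t)/b^2}}
\]
holds for all $t\in(0,1]$, and $u_{\lambda_2}$ is a supersolution precisely when the opposite inequality holds with $\lambda_2$ replacing $\lambda_1$.

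The central device is the mean value theorem: since $U(1)=0$, one has $U(t)=U'(\xi)(t-1)$ for some $\xi=\xi(t)\in[t,1]$. Setting $m_U:=\min_{[0,1]}U'>0$ and $M_U:=\max_{[0,1]}U'$, dividing through by the negative factor $t-1$ reduces both comparisons to two-sided bounds on the positive quantity
\[
\Phi(t):=\frac{2U'(\xi(t))}{\sqrt{G(t)}\,\sqrt{1-U^2(t)/b^2}}.
\]
The lower bound $\Phi(t)\geq 2m_U/\sqrt{M_G}=\lambda_1$ follows from $U'(\xi)\geq m_U$, $G(t)\leq M_G$ (Assumption~\ref{as4}), and $\sqrt{1-U^2(t)/b^2}\leq 1$. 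For the upper bound, the monotonicity of $U$ on $[0,1]$ together with $U(1)=0$ gives $|U(t)|\leq |U(0)|$ and hence $\sqrt{1-U^2(t)/b^2}\geq\sqrt{1-U^2(0)/b^2}$; combined with $U'(\xi)\leq M_U$ and $G(t)\geq m_G$ this yields $\Phi(t)\leq 2M_U/\sqrt{m_G(1-U^2(0)/b^2)}=\lambda_2$.

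To apply the sub- and supersolution method, one further needs the ordering $u_{\lambda_1}\leq u_{\lambda_2}$. Since $\lambda_1\leq\lambda_2$ and $\mathrm{e}^{v_{\lambda_1}}-1\leq 0$, one has $\lambda_2(\mathrm{e}^{v_{\lambda_1}}-1)\leq\lambda_1(\mathrm{e}^{v_{\lambda_1}}-1)$, so $v_{\lambda_1}$ is a subsolution of \eqref{abh} at parameter $\lambda_2$; comparison with the genuine solution $v_{\lambda_2}$ via the standard Taubes-type maximum-principle argument yields $v_{\lambda_1}\leq v_{\lambda_2}$, and hence $u_{\lambda_1}\leq u_{\lambda_2}$. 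Existence then follows by the usual monotone iteration on expanding balls $B_R$: using the shifted operator $\Delta-K$ with $K$ large enough to make the reaction term monotone in $u$, one builds a sequence $u_k^R\searrow u^R$ with $u_{\lambda_1}\leq u^R\leq u_{\lambda_2}$ on $B_R$, and uniform $W^{2,p}_{\mathrm{loc}}$ bounds from the $L^\infty$ sandwich permit a diagonal extraction as $R\to\infty$ producing a solution $u$ of \eqref{goveq11} on $\mathbb{R}^2$. The main technical obstacle is precisely this passage to the unbounded-domain limit, and especially the verification that $u\to 0$ at infinity; both are forced by the known decay of $u_{\lambda_1}$ and $u_{\lambda_2}$ (inherited from the Abelian--Higgs equation \eqref{abh} studied by Taubes), which sandwich $u$ into the same decay regime.
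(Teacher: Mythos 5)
Your proposal is correct and follows essentially the same route as the paper: apply the mean value theorem to $U$ at $t=\mathrm{e}^{v}$ using $U(1)=0$, bound $U'$, $G$, and $\sqrt{1-U^2/b^2}$ by $m_U, M_U, m_G, M_G$, and $\sqrt{1-U^2(0)/b^2}$ to identify $u_{\lambda_1}$ and $u_{\lambda_2}$ as sub- and supersolutions of (\ref{goveq11}), then conclude by monotone iteration with the boundary behavior inherited from the Taubes solutions of (\ref{abh}). You in fact supply two details the paper leaves implicit --- the ordering $u_{\lambda_1}\le u_{\lambda_2}$ via the comparison $\lambda_1(\mathrm{e}^{v_{\lambda_1}}-1)\ge\lambda_2(\mathrm{e}^{v_{\lambda_1}}-1)$, and the exhaustion-by-balls construction of the limit --- which strengthens rather than alters the argument.
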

\begin{proof}
For the simplicity, we let
\[f(u,x)=\frac{2U(\mathrm{e}^{u+u_0})}{\sqrt{G(\mathrm{e}^{u+u_0})}\sqrt{1-\frac{1}{b^2}U^2(\mathrm{e}^{u+u_0})}}+g\]
By assumptions on $G(t)$ and $U(t)$, there exist $M_U$, $m_U>0$ and $M_G$, $m_G>0$, such that $m_U<U'(t)<M_U$ and $0<m_G<G(t)<M_G$ when $0<t<1$. Let $\lambda_1=\frac{2m_U}{\sqrt{M_G}}$,  then we have
\begin{align}
f(u_{\lambda_1},x)&=\frac{2U(\mathrm{e}^{u_{\lambda_1}})}{\sqrt{G(\mathrm{e}^{u_{\lambda_1}})}\sqrt{1-\frac{1}{b^2}U^2(\mathrm{e}^{u_{\lambda_1}})}}+g\notag\\
&=\frac{2U'(\xi)(\mathrm{e}^{u_{\lambda_1}}-1)}{\sqrt{G(\mathrm{e}^{u_{\lambda_1}})}\sqrt{1-\frac{1}{b^2}U^2(\mathrm{e}^{u_{\lambda_1}})}}+g\notag\\
&<\frac{2m_U(\mathrm{e}^{u_{\lambda_1}}-1)}{\sqrt{M_G}}+g\notag\\
&=\lambda_1(\mathrm{e}^{u_{\lambda_1}}-1)+g=\Delta u_{\lambda_1}\notag
\end{align}
where $\mathrm{e}^{u_{\lambda_1}}<\xi<1$. 
Therefore $u_{\lambda_1}$ is a subsolution of equation (\ref{goveq11}). To find the supersolution, let $\lambda_2=\frac{2M_U}{\sqrt{m_G(1-U^2(0)/b^2})}$ then 
\begin{align}
f(u_{\lambda_2},x)&=\frac{2U(\mathrm{e}^{u_{\lambda_2}})}{\sqrt{G(\mathrm{e}^{u_{\lambda_2}})}\sqrt{1-\frac{1}{b^2}U^2(\mathrm{e}^{u_{\lambda_2}})}}+g\notag\\
&=\frac{2U'(\theta)(\mathrm{e}^{u_{\lambda_2}}-1)}{\sqrt{G(\mathrm{e}^{u_{\lambda_2}})}\sqrt{1-\frac{1}{b^2}U^2(\mathrm{e}^{u_{\lambda_2}})}}+g\notag\\
&>\frac{2M_U(\mathrm e^{u_{\lambda_2}}-1)}{\sqrt{m_G(1-\frac{1}{b^2}U^2(0)})}+g\notag\\
&=\lambda_2(\mathrm e^{u_{\lambda_2}}-1)+g=\Delta u_{\lambda_2}\notag
\end{align}
where $\mathrm{e}^{u_{\lambda_2}}<\theta<1$. Hence $u_{\lambda_2}$ is a supersolution of equation (\ref{goveq11}). Therefore we can get the solution $u$ of equation (\ref{goveq11}) by iteration. Since $u_{\lambda_1}<u<u_{\lambda_2}$ and $u_{\lambda}\rightarrow0$ as $|x|\rightarrow\infty$, for $\lambda>0$, we have $u\rightarrow0$ as $|x|\rightarrow\infty$. \end{proof}
Let $v=u+u_0$, we get the solution to the equation (\ref{goveq1}). Next, we discuss the asymptotic behavior of the solution $v$ as $|x|\rightarrow\infty$ and show that the minimal energy is quantized. We have the following theorem:
\begin{theorem}
The solution $v$ to the equation (\ref{goveq1}) with $v=0$ at infinity satisfies exponential decay estimates:
\[|v|=O(\mathrm{e}^{-\sqrt{M}|x|})\quad |\nabla v|=O(\mathrm{e}^{-\sqrt{M}|x|})\]
where $M=\frac{2U'(1)}{\sqrt{G(1)}}$. As a consequence, we have
\[|\phi|^2-1=O(\mathrm{e}^{-\sqrt{M}|x|}),\quad F_{12}=O(\mathrm{e}^{-\sqrt{M}|x|}),\quad |D_\mu\phi|=O(\mathrm{e}^{-\sqrt{M}|x|}),\quad\mu=1,2.\]
Moreover, the corresponding field configuration $(A, \phi)$ achieve the quantized minimal energy with the value: 
\[E=\Big|\int_{\mathbb R^2} F_{12}\Big|=2\pi N\]
\end{theorem}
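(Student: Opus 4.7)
I would prove the theorem in three steps: (i) establish pointwise exponential decay of the scalar field $v$ via a barrier/maximum-principle argument outside a large disk; (ii) bootstrap to decay of $\nabla v$ and of the physical quantities $|\phi|^{2}-1$, $F_{12}$, $D_\mu\phi$ using elliptic regularity together with the BPS identities \eqref{mainbps1}--\eqref{mainbps2}; (iii) compute the total energy by integrating the BPS-reduced density \eqref{sph} and invoking magnetic flux quantization.

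\emph{Step 1: decay of $v$.} Since $v<0$, $v\to 0$ at infinity, and $U(1)=0$, Taylor expansion of the right-hand side of \eqref{goveq1} at $e^{v}=1$ gives, for $|x|$ larger than all $|p_i|$,
\[
\Delta v \;=\; M\,v\,\bigl(1+o(1)\bigr),\qquad M=\frac{2U'(1)}{\sqrt{G(1)}},
\]
with the delta sources absent. For any $\varepsilon\in(0,\sqrt{M})$, the two-dimensional barrier $\psi_{\varepsilon}(x)=C\,e^{-(\sqrt{M}-\varepsilon)|x|}$ satisfies $\Delta\psi_{\varepsilon}-M\psi_{\varepsilon}<0$ for $|x|$ large. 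Choosing $C$ large enough that $\psi_{\varepsilon}+v\ge 0$ on $\partial B_{R}$ and applying the maximum principle on $\mathbb{R}^{2}\setminus B_{R}$ (where $v,\psi_{\varepsilon}\to 0$ at infinity) forces $|v|\le\psi_{\varepsilon}$. Since $\varepsilon>0$ is arbitrary, a standard refinement using the sharper barrier $|x|^{-1/2}e^{-\sqrt{M}|x|}$, which matches the Bessel asymptotics of the linearized equation in two dimensions, upgrades the exponent to $\sqrt{M}$.

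\emph{Step 2: decay of $\nabla v$ and of the field quantities.} With $|v|=O(e^{-\sqrt{M}|x|})$, the right-hand side of \eqref{goveq1} inherits the same decay outside $B_{R}$. Interior elliptic estimates ($L^{p}$ or Schauder) on unit balls centered at generic $x$ with $|x|$ large yield $|\nabla v|(x)\le C\sup_{B_{1}(x)}(|v|+|\Delta v|)=O(e^{-\sqrt{M}|x|})$. The identity $|\phi|^{2}-1=e^{v}-1=v\bigl(1+o(1)\bigr)$ immediately gives the decay of $|\phi|^{2}-1$; the BPS equation \eqref{mainbps1} with $\eta=0$ yields $F_{12}=-U(e^{v})/[\sqrt{G(e^{v})}\sqrt{1-U^{2}(e^{v})/b^{2}}]=O(v)$; and writing $\phi=e^{v/2}\exp\bigl(i\sum_{s}\arg(z-p_{s})\bigr)$, the BPS equation \eqref{mainbps2} produces the identity $|D_{1}\phi|^{2}+|D_{2}\phi|^{2}=\tfrac{1}{2}|\phi|^{2}|\nabla v|^{2}$, which transfers the decay of $\nabla v$ to $D_{\mu}\phi$.

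\emph{Step 3: energy quantization.} On a self-dual solution with flat metric ($\eta=0$), the reduced density \eqref{sph} collapses to $\mathcal{H}=\pm(F_{12}+J_{12})$, so
\[
E=\int_{\mathbb{R}^{2}}\mathcal{H}\,dx \;=\; \pm\!\int_{\mathbb{R}^{2}}F_{12}\,dx \;\pm\!\int_{\mathbb{R}^{2}}J_{12}\,dx.
\]
Because $J_{12}=\partial_{1}J_{2}-\partial_{2}J_{1}$ is a curl and the exponential decay of $\phi$ and $D_{\mu}\phi$ from Step~2 makes $J_{k}$ decay exponentially, Stokes' theorem on $B_{R}$ with $R\to\infty$ gives $\int J_{12}\,dx=0$. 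For the flux, integrating \eqref{goveq1} over $\mathbb{R}^{2}$ and using $\int\Delta v\,dx=\lim_{R\to\infty}\int_{\partial B_{R}}\partial_{r}v\,dS=0$ (by Step~2) leaves $\int_{\mathbb{R}^{2}}\tfrac{2U(e^{v})}{\sqrt{G(e^{v})}\sqrt{1-U^{2}(e^{v})/b^{2}}}\,dx=-4\pi N$; combined with \eqref{mainbps1}, this reads $-2\int F_{12}\,dx=-4\pi N$, i.e.\ $\int F_{12}\,dx=2\pi N$, and hence $E=\bigl|\int F_{12}\,dx\bigr|=2\pi N$.

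The main obstacle, in my view, is extracting the \emph{sharp} exponent $\sqrt{M}$ rather than merely $\sqrt{M}-\varepsilon$: the quadratic remainder in the Taylor expansion of the nonlinearity must be shown not to contaminate the leading exponential rate, which is typically handled via the sharper radial barrier calibrated to the two-dimensional Bessel asymptotics. A secondary technical point is verifying that the curl integral $\int J_{12}\,dx$ vanishes cleanly, which requires the exponential decay of both $\phi$ and $D_{\mu}\phi$ simultaneously, precisely what Step~2 provides.
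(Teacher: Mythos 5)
Your proposal is correct and follows essentially the same route as the paper: linearization of the governing equation \eqref{goveq1} at infinity plus a comparison argument for the decay of $v$ and $\nabla v$, transfer to $|\phi|^2-1$, $F_{12}$, $D_\mu\phi$ via the formulas \eqref{fml} and the BPS equations, and energy quantization by computing $\int F_{12}=2\pi N$ from the integrated equation and killing $\int J_{12}$ by Stokes' theorem and the decay estimates. If anything, your Step 1 is more careful than the paper's, which passes directly from the limit of the linearized coefficient to the rate $\sqrt{M}$ without addressing the $\sqrt{M}-\varepsilon$ loss that your Bessel-type barrier refinement is designed to remove.
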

\begin{proof}
We consider the solution $v$ to the equation (\ref{goveq1}). We may choose $r_0>0$ large enough so that $P\subset \overline{B(r_0)}$. Let $\Omega=\mathbb R^2\setminus\overline{B(r_0)}$, we have 
\begin{equation}\label{asy0}
\Delta v=\frac{2U(\mathrm{e}^v)}{\sqrt{G(\mathrm{e}^v)}\sqrt{1-\frac{1}{b^2}U^2(\mathrm{e}^v)}}\quad x\in \Omega
\end{equation}
Since $v\rightarrow0$ as $|x|\rightarrow\infty$, linearizing the equation (\ref{asy0}), we see that there exists a function $\xi(x)$ with $\xi(x)\rightarrow0$ as $|x|\rightarrow\infty$ such that
\[\frac{2U(\mathrm{e}^v)}{\sqrt{G(\mathrm{e}^v)}\sqrt{1-\frac{1}{b^2}U^2(\mathrm{e}^v)}}=\Big[-\frac{G'(\mathrm e^{\xi})\mathrm e^{\xi}}{G^{\frac{3}{2}}(\mathrm e^{\xi})}\frac{U(\mathrm{e}^\xi)}{\sqrt{1-\frac{1}{b^2}U^2(\mathrm{e}^\xi)}}+\frac{2U'(\mathrm e^{\xi})\mathrm e^{\xi}}{\sqrt{G(\mathrm e^{\xi})}(1-U^2(\xi)/b^2)^{\frac{3}{2}}}\Big]v\]
Let 
\[F(x)=-\frac{G'(\mathrm e^{\xi})\mathrm e^{\xi}}{G^{\frac{3}{2}}(\mathrm e^{\xi})}\frac{U(\mathrm{e}^\xi)}{\sqrt{1-\frac{1}{b^2}U^2(\mathrm{e}^\xi)}}+\frac{2U'(\mathrm e^{\xi})\mathrm e^{\xi}}{\sqrt{G(\mathrm e^{\xi})}(1-U^2(\xi)/b^2)^{\frac{3}{2}}}\]
we have
\[\lim_{x\to\infty}F(x)=\frac{2U'(1)}{\sqrt{G(1)}}\]
therefore we have
\[|v|=O(\mathrm{e}^{-\sqrt{M}|x|})\] 
where $M=\frac{2U'(1)}{\sqrt{G(1)}}$. Moreover, we see that the right hand side of (\ref{asy0}) is in $L^2(\Omega)$, therefore by $L^2$-estimates we have $v\in W^{2,2}(\Omega)$. For $\partial_j v$ with fixed $j$, we have
\begin{equation}\label{asy01}
\Delta(\partial_j v)=\frac{2U'(\mathrm e^{v})\mathrm e^{v}\partial_j v}{\sqrt{G(\mathrm e^{v})}(1-U^2(v)/b^2)^{\frac{3}{2}}}-\frac{G'(\mathrm e^{v})\mathrm e^{v}}{G^{\frac{3}{2}}(\mathrm e^{v})}\frac{U(\mathrm{e}^v)\partial_j v}{\sqrt{1-\frac{1}{b^2}U^2(\mathrm{e}^v)}}
\end{equation}
the coefficient of $\partial_j v$ on the right hand side again has the limit $M$ as $|x|\rightarrow \infty$. Therefore, $|\nabla v|$ has the same decay estimates as $v$. Moreover, according to (\ref{fml}), we get the exponential decay for the Maxwell stress tensor, covariant derivatives, and the Higgs field $\phi$
\[|\phi|^2-1=O(\mathrm{e}^{-\sqrt{M}|x|}),\quad F_{12}=O(\mathrm{e}^{-\sqrt{M}|x|}),\quad |D_\mu\phi|=O(\mathrm{e}^{-\sqrt{M}|x|}),\quad\mu=1,2.\]
With the above decay estimates, we can calculate the energy $E=|\int_{\mathbb R^2}(F_{12}+J_{12})|$, we first calculate the magnetic flux term $\Phi=\int_{\mathbb R^2}F_{12}$, from equation (\ref{mainbps1}), we see that
\begin{equation}
\int_{\mathbb R^2}F_{12}=-\int_{\mathbb R^2}\frac{U(|\phi|^2)}{\sqrt{G(|\phi|^2)}\sqrt{1-\frac{1}{b^2}U^2(|\phi|^2)}}=-\int_{\mathbb R^2}\frac{U(\mathrm e^v)}{\sqrt{G(\mathrm e^v)}\sqrt{1-\frac{1}{b^2}U^2(\mathrm e^v)}}
\end{equation}
Substituting $v$ with $u+u_0$, from equation (\ref{goveq11}), we have
\begin{equation}
\begin{aligned}
\int_{\mathbb R^2}F_{12}&=-\int_{\mathbb R^2}\frac{U(\mathrm{e}^{u+u_0})}{\sqrt{G(\mathrm{e}^{u+u_0})}\sqrt{1-\frac{1}{b^2}U^2(\mathrm{e}^{u+u_0})}}\\
&=-\frac{1}{2}\int_{\mathbb R^2}\Delta u+\frac{1}{2}\int_{\mathbb R^2}g\\
&=-\frac{1}{2}\lim_{r\to\infty}\oint_{|x|=r}\frac{\partial u}{\partial n}\mathrm d s+\frac{1}{2}\int_{\mathbb R^2}g
\end{aligned}
\end{equation}
From the decay estimates for $|\nabla v|$ and $|\nabla u|$, we see that line integrals 
\[\lim_{r\to\infty}\oint_{|x|=r}\frac{\partial u}{\partial n}\mathrm d s=0\]
Therefore
\begin{equation}
\Phi=\int_{\mathbb R^2}F_{12}=\frac{1}{2}\int_{\mathbb R^2}g=2\pi N
\end{equation}
On the other hand,
\begin{equation}
\int_{\mathbb R^2}J_{12}=\lim_{r\to\infty}\oint_{|x|=r}J_k\mathrm d x^k=\lim_{r\to\infty}\oint_{|x|=r}g(\mathrm e^v)\mathrm e^v(-\partial_2v\mathrm d x^1+\partial_1 v\mathrm d x^2)=0
\end{equation}
due to the decay estimates for $|\nabla v|$. Therefore we get the quantized energy
\begin{equation}
E=\int_{\mathbb R^2}\mathcal H=2\pi N
\end{equation}
\end{proof}

\section{Self-Dual Cosmic Strings}
In this section, we study the cosmic string solutions arising in the generalized Born-Infeld model by considering the coupled system (\ref{eq1}), (\ref{eq2}), (\ref{eq3}). Similar to the vortex solutions case, we consider the problem in the space of type $\mathbb R^{1,1}\times S$, where $S$ is noncompact and complete. We are interested in the static solution $(A, \phi)$ such that $A=(0, A_1, A_2, 0)$ and $\phi$ depends only on variables $x_1$ and $x_2$. From the previous discussion, we see that with the assumptions \ref{as0}, \ref{as1}, and \ref{as2}, the equations (\ref{eq1}) and (\ref{eq2}) can be reduced to 
\begin{equation}
F_{12}+\frac{\mathrm{e}^{\eta}U(|\phi|^2)}{\sqrt{G(|\phi|^2)}\sqrt{1-\frac{1}{b^2}U^2(|\phi|^2)}}=0\label{cos1}
\end{equation}

\begin{equation}
D_1\phi+\mathrm{i}D_2\phi=0\label{cos2}
\end{equation}
Next we show that the Einstein equations (\ref{eq3}) can also be simplified. We consider the energy-momentum tensors $T_{\alpha\beta}$.
\[\mathcal H=T_{00}=b^2\Big(\sqrt{1+\frac{G(|\phi|^2)\mathrm{e}^{-2\eta}F_{12}^2}{b^2}}-1\Big)+\mathrm{e}^{-\eta}w(|\phi|^2)(|D_1\phi|^2+|D_2\phi|^2)+V(|\phi|^2)=-T_{33}\notag
\]
Since $F_{0\mu'}=0$ for all $\mu'$, and $g_{0\mu}=0$ when $\mu\ne0$, we have 
\[T_{0\mu}=\frac{-g^{\mu'\nu'}F_{0\mu'}F_{\mu\nu'}G(|\phi|^2)}{\sqrt{1+\frac{G(|\phi|^2)\mathrm{e}^{-2\eta}F_{12}^2}{b^2}}}-g_{0\mu}L=0  \quad when\ \mu\ne0\]
Similarly, we have
\[T_{3\mu}=0 \quad when \quad \mu=0,1,2\]
Moreover, from self-dual reduction, we have
\[
T_{12}=T_{21}=w(|\phi|^2)\Big(D_1\phi\overline{D_2\phi}+\overline{D_1\phi}D_2\phi\Big)=0
\]
\begin{align}
T_{11}&=\frac{-g^{\mu'\nu'}F_{1\mu'}F_{1\nu'}G(|\phi|^2)}{\sqrt{1+\frac{G(|\phi|^2)\mathrm{e}^{-2\eta}F_{12}^2}{b^2}}}+2w(|\phi|^2)|D_1\phi|^2\notag\\
&+\mathrm{e}^{\eta}\Big(b^2(1-\sqrt{1+\frac{G(|\phi|^2)\mathrm{e}^{-2\eta}F_{12}^2}{b^2}})-\mathrm{e}^{-\eta}w(|\phi|^2)(|D_1\phi|^2+|D_2\phi|^2)-V(|\phi|^2)\Big)\notag\\
&=\frac{\mathrm{e}^{\eta}(F^2-1)b^2}{F}+w(|\phi|^2)(|D_1\phi|^2-|D_2\phi|^2)+\mathrm{e}^{\eta}(b^2(1-F)-b^2(1-U))=0\notag
\end{align}
\begin{align}
T_{22}&=\frac{-g^{\mu'\nu'}F_{2\mu'}F_{2\nu'}G(|\phi|^2)}{\sqrt{1+\frac{G(|\phi|^2)\mathrm{e}^{-2\eta}F_{12}^2}{b^2}}}+2w(|\phi|^2)|D_2\phi|^2\notag\\
&+\mathrm{e}^{\eta}\Big(b^2(1-\sqrt{1+\frac{G(|\phi|^2)\mathrm{e}^{-2\eta}F_{12}^2}{b^2}})-\mathrm{e}^{-\eta}w(|\phi|^2)(|D_1\phi|^2+|D_2\phi|^2)-V(|\phi|^2)\Big)\notag\\
&=\frac{\mathrm{e}^{\eta}(F^2-1)b^2}{F}+w(|\phi|^2)(|D_2\phi|^2-|D_1\phi|^2)+\mathrm{e}^{\eta}(b^2(1-F)-b^2(1-U))=0\notag
\end{align}
Therefore, there are only two nonzero energy-momentum tensors, which are $T_{00}$ and $T_{33}$. Moreover, with the metric $(g_{\mu\nu})=\mathrm{diag}\{1, -\mathrm{e}^{\eta},  -\mathrm{e}^{\eta}, -1\}$, Einstein tensors are 
\[G_{00}=-G_{33}=-K_g, \quad G_{\mu\nu}=0,\quad \mathrm{otherwise}\]
where $K_g$ is the Gaussian curvature of $S$. So the Einstein equations (\ref{eq3}) can be reduced to
\begin{equation}\label{cos3}
\begin{aligned}
K_g=8\pi G\mathcal H\\
T_{ij}=0 \quad  i,j=1,2
\end{aligned}
\end{equation}
Combining this result with self-dual reduction, we derive the reduced coupled system (\ref{cos1}), (\ref{cos2}), and (\ref{cos3}). When $S$ is noncompact and complete, we may assume that $(S, g_{ij})=({\mathbb R^2, \mathrm e^{\eta}\delta_{ij}})$. The solution $(\phi, A, g)$ to the system with prescribed $N$ distinct zeros of $\phi$ represents the cosmic strings. Similar to the vortex solution case, we assume that the modifying functions $G(t)$, $U(t)$, and $w(t)$ satisfy the following conditions:
\begin{assumption}\label{as21}
$U(t)$ is a $C^2$ function in $\mathbb R$ with $U(1)=0$ and $U'(t)>0$. 
\end{assumption}
\begin{assumption}\label{as22}
G(t) is a continuous positive function in $\mathbb R$ and there exist $m_G$, $M_G$ such that $0<m_G<\max_{0\le t\le1}G(t)<M_G$.
\end{assumption}
\begin{assumption}\label{as23}
$g(t)>0$ when $t\in(0,1)$
\end{assumption}
With above assumptions, we shall establish the existence theorem for cosmic string solutions:
\begin{theorem}\label{costh}
Given $N$ points $p_1$, $p_2$,...$p_N$ in $\mathbb R^2$ satisfying $8\pi GN<1$, the coupled system (\ref{cos1}), (\ref{cos2}), and (\ref{cos3}) has a energy-minimizing solution $(\phi, A, g)$ so that $p_1$, $p_2$,...$p_N$ are zeros of $\phi$, provided assumptions (\ref{as21}), (\ref{as22}) and (\ref{as23}) are satisfied. Moreover, the conformal factor $\mathrm e^{\eta}$ satisfies the decay estimate 
\[\mathrm e^{\eta(x)}=O(|x|^{-16\pi GN})\quad as\quad |x| \to\infty\]
and for any $b>0$, 
\[|\phi|^2-1=O(|x|^{-b}),\quad F_{12}=O(|x|^{-b}),\quad |D_\mu\phi|=O(|x|^{-b}),\quad\mu=1,2\]
as $|x|\to\infty$ and we have the quantized result for magnetic flux, minimal energy, and total curvature
\[\Phi=\int_{\mathbb R^2}F_{12}=2\pi N,\quad E=\int_{\mathbb R^2}\mathcal H\mathrm e^{\eta}=2\pi N,\quad  \int_{\mathbb R^2}K_g\mathrm d \Omega_g=16\pi^2GN\]
\end{theorem}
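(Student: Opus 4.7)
The plan is to extend the vortex argument of Section 3 to the fully coupled system (\ref{cos1})--(\ref{cos3}), the central new difficulty being the two-way interaction between the matter field $v=\ln|\phi|^2$ and the conformal factor $\eta$. In isothermal coordinates, combining the Taubes representation (\ref{fml}) with the self-dual equation (\ref{cos1}) gives the governing elliptic equation
\begin{equation*}
\Delta v=\frac{2\mathrm{e}^\eta U(\mathrm{e}^v)}{\sqrt{G(\mathrm{e}^v)}\sqrt{1-U^2(\mathrm{e}^v)/b^2}}+4\pi\sum_{i=1}^N\delta_{p_i},
\end{equation*}
while the reduced Einstein equation $K_g=8\pi G\mathcal H$ becomes $\Delta\eta=-16\pi G\,\mathrm{e}^\eta\mathcal H$, with $\mathrm{e}^\eta\mathcal H=F_{12}+J_{12}$ at self-duality from (\ref{sph}). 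I would represent $\eta$ through the logarithmic Green's function of $-\Delta$ on $\mathbb R^2$; the hypothesis $8\pi GN<1$ appears as the integrability threshold for $\mathrm{e}^\eta$ at infinity.

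Next, remove the Dirac sources via the background $u_0=\sum_s\ln\bigl(|x-p_s|^2/(1+|x-p_s|^2)\bigr)$ and set $u=v-u_0$. Following the outline from Section 1, existence is established by a two-level regularization. On a finite disk $B_R$ with Dirichlet data, freeze $\eta$ in an admissible class and solve the $u$-equation by sub/super-solutions parallel to Theorem~3.1: the bracketing pair $u_{\lambda_1}$, $u_{\lambda_2}$ from the Abelian-Higgs equation (\ref{abh2}), with $\lambda_1=2m_U/\sqrt{M_G}$ and $\lambda_2=2M_U/\sqrt{m_G(1-U^2(0)/b^2)}$, continues to work once the frozen factor $\mathrm{e}^\eta$ is absorbed into the effective $\lambda$. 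Alternating between the updated $v$- and $\eta$-subproblems via a Schauder fixed point (using hypothesis \ref{as23} to control $J_{12}$) yields a compatible pair on $B_R$; passing to $R\to\infty$ by uniform $W^{2,p}_{\mathrm{loc}}$ bounds and monotone convergence produces the solution on $\mathbb R^2$.

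For the asymptotics, the convolution representation $\eta(x)=-8G\int_{\mathbb R^2}\log|x-y|\,(F_{12}+J_{12})(y)\,dy+C$ together with $\int_{\mathbb R^2}(F_{12}+J_{12})=2\pi N$ (from $\int g=4\pi N$ and the vanishing boundary integrals of Theorem~3.2) gives $\eta(x)\sim-16\pi GN\log|x|$ and the announced $\mathrm{e}^{\eta(x)}=O(|x|^{-16\pi GN})$. Linearizing the $v$-equation near $|\phi|^2=1$ yields $\Delta v\approx M\,\mathrm{e}^\eta v$ with $M=2U'(1)/\sqrt{G(1)}$; since the exponent $16\pi GN<2$, comparison against explicit barriers of $\Delta w=M|x|^{-16\pi GN}w$ produces the super-polynomial bound $|\phi|^2-1=O(|x|^{-b})$ for every $b>0$, which transfers to $F_{12}$ and $D_\mu\phi$ via (\ref{fml}). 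The flux calculation of Theorem~3.2 then gives $\Phi=2\pi N$, the energy quantization $E=2\pi N$ follows from (\ref{sph}) and the same boundary computation, and $\int_{\mathbb R^2}K_g\,d\Omega_g=8\pi G\,E=16\pi^2 GN$ is immediate from $K_g=8\pi G\mathcal H$.

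The main obstacle is the coupling itself: the $v$-equation contains $\mathrm{e}^\eta$ as an unknown coefficient, while $\eta$ is a nonlocal functional of $v$ through $\mathcal H$. The restriction $8\pi GN<1$ is the sharp threshold that keeps $\mathrm{e}^\eta$ integrable at infinity, makes the $\eta$-subproblem solvable in the class of logarithmically growing functions, and permits the iteration between the two subproblems to converge.
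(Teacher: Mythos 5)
Your proposal misses the paper's central device and, as written, the key existence step fails. The paper does \emph{not} iterate between the $v$- and $\eta$-subproblems: it observes that combining $\Delta\eta=-16\pi G\,\mathrm{e}^{\eta}\mathcal H$ with the identity $\mathrm{e}^{\eta}\mathcal H=\tfrac12\Delta F(\mathrm{e}^v)-\tfrac12\Delta v+2\pi\sum_s\delta_{p_s}$ (where $F'=2g$) makes
\[
h=\frac{\eta}{16\pi G}+\frac12 F(\mathrm e^v)-\frac12 v+\sum_{s=1}^N\ln|x-p_s|
\]
an entire harmonic function; choosing $h\equiv c$ gives $\mathrm e^{\eta}$ \emph{algebraically} in terms of $v$, so the coupled system collapses to the single scalar equation (\ref{gove2}) and no Schauder fixed point, no Green's-function representation of $\eta$, and no alternating scheme are needed. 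Your route keeps the nonlocal coupling and therefore has to prove convergence of the iteration, which you do not do; it also runs into a circularity in the asymptotics: extracting $\eta\sim-16\pi GN\log|x|$ from the convolution representation requires integrability and decay of $F_{12}+J_{12}$, which requires decay of $v$ and $\nabla v$, which in turn requires knowing the decay of $\mathrm e^{\eta}$.

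The concrete technical failure is the claim that the pair $u_{\lambda_1},u_{\lambda_2}$ of Theorem~3.1 ``continues to work once the frozen factor $\mathrm e^{\eta}$ is absorbed into the effective $\lambda$.'' It cannot be absorbed: the coefficient multiplying the nonlinearity in the string equation is $\mathrm e^{\eta}\prod_s(1+|x-p_s|^2)^{-8\pi G}\to 0$ as $|x|\to\infty$, so it is not bounded below by any positive constant. For the subsolution one needs $\lambda_1(\mathrm e^{v_{\lambda_1}}-1)\le \mathrm e^{\eta}\,2U(\mathrm e^{v_{\lambda_1}})/\bigl(\sqrt{G}\sqrt{1-U^2/b^2}\bigr)$; both sides are negative and comparable to $v_{\lambda_1}$, but the right side carries the extra vanishing factor $\mathrm e^{\eta}$, so the inequality reverses for large $|x|$. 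This is precisely why the paper replaces that pair by the supersolution $-u_{0,\delta}$ (Lemma~\ref{lm1}) and the subsolution $u=0$ (Lemma~\ref{lm2}); proving $0$ is a subsolution is the delicate point, requiring the free constant $c$ in $\eta$ to be taken large and the hypothesis $8\pi GN<1$ (i.e.\ $16\pi GN<2$) so that the negative nonlinear term dominates the background source $g=O(|x|^{-4})$ at infinity, followed by the $\delta$-regularization and a diagonal limit. Relatedly, your reading of $8\pi GN<1$ as the ``integrability threshold for $\mathrm e^{\eta}$ at infinity'' is off: under this condition $\mathrm e^{\eta}\sim|x|^{-16\pi GN}$ is \emph{not} integrable at infinity; the hypothesis is used for the subsolution lemma, for local $L^p$ ($p>1$) control of the singular weight $\prod_s|x-p_s|^{-16\pi G}$ in the uniform elliptic estimates, and to keep $16\pi GN<2$ in the barrier arguments for the decay estimates.
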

It is interesting to compare this result to the vortex solution case, when there is no effect of gravity. Here we see that the existence of gravity induces an upper bound for the number of cosmic strings and changes the decay rate of the solution. We will give the proof of the theorem in the next section. 

\section{Existence of Self-Dual Cosmic Strings in $\mathbb R^2$ and Decay Estimates}
In this section we prove the existence of the cosmic strings solution to the system (\ref{cos1}), (\ref{cos2}), and (\ref{cos3}). We first solve the Einstein equations (\ref{cos3}) and find the conformal factor $\mathrm e^{\eta}$. Again, we use the substitution $v=\ln|\phi|^2$, from (\ref{fml}), we have $|D_1\phi|^2+|D_2\phi|^2=\frac{1}{2}\mathrm{e}^v|\nabla v|^2$. Plugging this into $\mathcal H$, we get
\begin{align}
\mathcal H&=b^2(F-1)+\mathrm{e}^{-\eta}w(|\phi|^2)(|D_1\phi|^2+|D_2\phi|^2)+b^2(1-U)\notag\\
&=\frac{\mathrm{e}^{-\eta}}{2}w(\mathrm{e}^v)\mathrm{e}^v|\nabla v|^2+b^2(F-U)\notag\\
&=\frac{\mathrm{e}^{-\eta}}{2}w(\mathrm{e}^v)\mathrm{e}^v|\nabla v|^2+\frac{U^2(\mathrm{e}^v)}{\sqrt{1-\frac{1}{b^2}U^2(\mathrm{e}^v)}}\notag
\end{align}
Let $F(t)$ be a smooth function such that $F'(t)=2g(t)$, from assumption (\ref{as1}), (\ref{as2}) and equation (\ref{cos1}), we have
\begin{align}
\mathcal H&=\frac{\mathrm{e}^{-\eta}}{2}w(\mathrm{e}^v)\mathrm{e}^v|\nabla v|^2+\frac{U^2(\mathrm{e}^v)}{\sqrt{1-\frac{1}{b^2}U^2(\mathrm{e}^v)}}\notag\\
&=\frac{\mathrm{e}^{-\eta}}{2}\Big(\Delta F(\mathrm{e}^v)-F'(\mathrm{e}^v)\mathrm{e}^v\Delta v\Big)+\frac{U^2(\mathrm{e}^v)}{\sqrt{1-\frac{1}{b^2}U^2(\mathrm{e}^v)}}\notag\\
&=\frac{1}{2}\Delta_{g}F(\mathrm{e}^v)+F'(\mathrm{e}^v)\mathrm{e}^v\mathrm{e}^{-\eta}F_{12}+\frac{U^2(\mathrm{e}^v)}{\sqrt{1-\frac{1}{b^2}U^2(\mathrm{e}^v)}}\notag\\
&=\frac{1}{2}\Delta_{g}F(\mathrm{e}^v)-F'(\mathrm{e}^v)\mathrm{e}^v\frac{U(\mathrm{e}^v)}{\sqrt{1-\frac{1}{b^2}U^2(\mathrm{e}^v)}\sqrt{G(\mathrm{e}^v)}}+\frac{U^2(\mathrm{e}^v)}{\sqrt{1-\frac{1}{b^2}U^2(\mathrm{e}^v)}}\notag\\
&=\frac{1}{2}\Delta_{g}F(\mathrm{e}^v)+\frac{U(\mathrm{e}^v)\Big(-F'(\mathrm{e}^v)\mathrm{e}^v+U(\mathrm{e}^v)\sqrt{G(\mathrm{e}^v)}\Big)}{\sqrt{G(\mathrm{e}^v)}\sqrt{1-\frac{1}{b^2}U^2(\mathrm{e}^v)}}\notag\\
&=\frac{1}{2}\Delta_{g}F(\mathrm{e}^v)-\frac{U(\mathrm{e}^v)}{\sqrt{G(\mathrm{e}^v)}\sqrt{1-\frac{1}{b^2}U^2(\mathrm{e}^v)}}\notag
\end{align}
we can then simplify the Einstein equations 
\begin{equation}\label{spcos3}
K_g=4\pi G\Delta_{g}F(\mathrm{e}^v)-\frac{8\pi GU(\mathrm{e}^v)}{\sqrt{G(\mathrm{e}^v)}\sqrt{1-\frac{1}{b^2}U^2(\mathrm{e}^{v})}}
\end{equation}
Recall that for a metric $g$ which is conformal to a known metric $g_0$, we have $g=\mathrm{e}^{\eta}g_0$ and $\Delta_{g}=\mathrm{e}^{-\eta}\Delta_{g_0}$ for some function $\eta$, and the corresponding Gauss curvatures $K_g$ and $K_{g_0}$ satisfy the equation
\begin{equation}
-\Delta_{g_0}\eta+2K_{g_0}=2K_g\mathrm{e}^{\eta}
\end{equation}
Combining this with equation (\ref{spcos3}), we get
\begin{equation}
\Delta_{g_0}(\eta+8\pi GF(\mathrm{e}^v))=2K_{g_0}+16\pi G\mathrm{e}^{\eta}\frac{U(\mathrm{e}^v)}{\sqrt{G(\mathrm{e}^v)}\sqrt{1-\frac{1}{b^2}U^2(\mathrm{e}^v)}}
\end{equation}
Since we are interested the flat metric $g_0$ on $\mathbb R^2$, we have $K_{g_0}=0$, $\Delta_{g_0}=\Delta$, which is the standard Laplace operator on $\mathbb R^2$. The above equation is finally reduced to:
\begin{equation}\label{cos12}
\Delta(\eta+8\pi GF(\mathrm{e}^v))=16\pi G\mathrm{e}^{\eta}\frac{U(\mathrm{e}^v)}{\sqrt{G(\mathrm{e}^v)}\sqrt{1-\frac{1}{b^2}U^2(\mathrm{e}^v)}}
\end{equation}
On the other hand, from the previous discussion, we see that equations (\ref{cos1}) and (\ref{cos2}) can be reduced to the elliptic equation 
\begin{equation}\label{cos22}
\Delta v=\frac{2\mathrm{e}^{\eta}U(\mathrm{e}^v)}{\sqrt{G(\mathrm{e}^v)}\sqrt{1-\frac{1}{b^2}U^2(\mathrm{e}^v)}}+4\pi\sum_{i=1}^{N}\delta_{p_i}
\end{equation}
Therefore we finally reduce the system (\ref{cos1}), (\ref{cos2}), and (\ref{cos3}) to the coupled system (\ref{cos12}) and (\ref{cos22}).
Moreover, we have
\begin{equation}
\Delta(\eta+8\pi GF(\mathrm e^v)-8\pi Gv)+32G\pi^2\sum_{s=1}^N\delta_{p_s}=0
\end{equation}
Therefore we see that the function 
\[h=\frac{\eta}{16\pi G}+\frac{1}{2}F(\mathrm e^v)-\frac{1}{2}v+ \sum_{s=1}^N\ln|x-p_s|\]
is an entire harmonic function, and we are interested in the case when $h$ is a constant $c$, then we have
\begin{equation}
\eta=16\pi Gc-8\pi GF(\mathrm e^v)+8\pi Gv-16\pi G\sum_{s=1}^N\ln|x-p_s|
\end{equation}
and
\begin{equation}
\mathrm e^{\eta}=\mathrm e^{16\pi Gc-8\pi GF(\mathrm e^v)+8\pi Gv}\Big(\prod_{s=1}^N |x-p_s|^{-2}\Big)^{8\pi G}
\end{equation}
which resolves the equation (\ref{cos12}).
We can now rewrite the equation (\ref{cos22}) to be 
\begin{equation}\label{gove2}
\Delta v=\frac{2\mathrm e^{16\pi Gc-8\pi GF(\mathrm e^v)+8\pi Gv}U(\mathrm{e}^v)}{\sqrt{G(\mathrm{e}^v)}\sqrt{1-\frac{1}{b^2}U^2(\mathrm{e}^v)}}\Big(\prod_{s=1}^N |x-p_s|^{-2}\Big)^{8\pi G}+4\pi\sum_{i=1}^{N}\delta_{p_i}
\end{equation}
The existence part of Theorem \ref{costh} is then equivalent to the following theorem:
\begin{theorem}\label{cose}
The equation (\ref{gove2}) has a solution with boundary value $v=0$ at infinity if $8\pi GN<1$ and assumptions \ref{as21}, \ref{as22} and \ref{as23} are satisfied.
\end{theorem}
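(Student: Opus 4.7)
The plan is to reduce the theorem to solving a nonsingular semilinear elliptic equation on $\mathbb{R}^2$, attack it through a regularization on expanding balls, and then pass to the limit. First I would introduce the same background function $u_0=\sum_{s=1}^N\ln\frac{|x-p_s|^2}{1+|x-p_s|^2}$ used in Section~3 and set $u=v-u_0$. Since $\Delta u_0=4\pi\sum_s\delta_{p_s}-g$, the delta sources cancel, and the singular weight $\bigl(\prod_s|x-p_s|^{-2}\bigr)^{8\pi G}$ appearing in (\ref{gove2}) combines with the factor $\mathrm{e}^{8\pi Gu_0}$ coming from $\mathrm{e}^{8\pi Gv}$ to produce the smooth bounded weight $\prod_s(1+|x-p_s|^2)^{-8\pi G}$, which decays like $|x|^{-16\pi GN}$ at infinity. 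This weight lies in $L^1(\mathbb{R}^2)$ exactly when $8\pi GN<1$, so the hypothesis of the theorem enters at the very first step and foreshadows the decay rate of $\mathrm{e}^\eta$ claimed in Theorem~\ref{costh}.

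The resulting equation has the schematic form $\Delta u=W(x)\,\Psi(u+u_0)\,\mathrm{e}^{8\pi Gu}+g$, with $W$ a smooth positive $L^1\cap L^\infty$ weight and $\Psi(t)<0$ for $t<0$ (since $U(\mathrm{e}^t)<0$ there by Assumption~\ref{as21}). I would then construct an ordered pair of sub- and super-solutions by following the strategy of Section~3, taking solutions $u_{\lambda_1}$ and $u_{\lambda_2}$ of the Abelian--Higgs equation (\ref{abh}) as building blocks. The constants $\lambda_1,\lambda_2$ must now be enlarged to absorb the extra factor $W(x)\mathrm{e}^{8\pi Gu}$ compared with the pure-vortex setting, and the free additive constant $c$ from the harmonic reduction can be tuned simultaneously so that both inequalities hold globally. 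A monotone iteration applied to the shifted operator $-\Delta+K$ for a sufficiently large constant $K$, which restores monotonicity of the right-hand side in $u$, then yields a solution $u_R$ of the corresponding Dirichlet problem on each ball $B_R$ satisfying $u_{\lambda_1}\le u_R\le u_{\lambda_2}$.

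Finally I would pass to the limit $R\to\infty$. The uniform sandwich $u_{\lambda_1}\le u_R\le u_{\lambda_2}$, together with the integrable weight $W$ and interior elliptic $L^p$ estimates, yields local $W^{2,p}$ compactness, so a diagonal subsequence converges to a solution $u$ on all of $\mathbb{R}^2$ still trapped between $u_{\lambda_1}$ and $u_{\lambda_2}$; since both bounds vanish at infinity, $u\to 0$ and the boundary condition $v\to 0$ is recovered. The main obstacle I anticipate is the non-monotonicity of the right-hand side in $u$ caused by the self-coupling $\mathrm{e}^{8\pi Gu}$: it forces the use of the shifted iteration, and the constant $c$ must be chosen compatibly so that the supersolution inequality survives on the whole plane rather than just on each $B_R$. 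The constraint $8\pi GN<1$ is exactly what makes the weight integrable, the iteration converge, and the quantized total curvature $16\pi^2 GN$ in Theorem~\ref{costh} come out correctly.
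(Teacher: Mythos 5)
Your reduction to (\ref{gove22}) via $u=v-u_0$ and your final ball-exhaustion/diagonal compactness step both mirror the paper, but the core of your construction fails: the subsolution cannot be taken from the Abelian--Higgs family (\ref{abh}). Write $W(x)=\prod_{s=1}^N(1+|x-p_s|^2)^{-8\pi G}$. For $u_{\lambda_1}$ (solving (\ref{abh2})) to be a subsolution of (\ref{gove22}) you need $\Delta u_{\lambda_1}\ge f(u_{\lambda_1},x)$, which, after writing $U(\mathrm{e}^{v_{\lambda_1}})=U'(\xi)\,(\mathrm{e}^{v_{\lambda_1}}-1)$ and dividing by $\mathrm{e}^{v_{\lambda_1}}-1<0$, becomes
\begin{equation*}
\lambda_1\ \le\ W(x)\,\mathrm{e}^{8\pi G\left(2c-F(\mathrm{e}^{v_{\lambda_1}})+u_{\lambda_1}\right)}\,
\frac{2U'(\xi)}{\sqrt{G(\mathrm{e}^{v_{\lambda_1}})}\sqrt{1-U^2(\mathrm{e}^{v_{\lambda_1}})/b^2}}\,.
\end{equation*}
Since $W(x)\sim |x|^{-16\pi GN}\to 0$ while every other factor on the right is bounded, the right-hand side tends to $0$ as $|x|\to\infty$ for every fixed $c$; no positive constant $\lambda_1$ can satisfy this, and enlarging $c$ only multiplies the right side by the constant $\mathrm{e}^{16\pi Gc}$, so tuning $c$ cannot repair the decay. (The supersolution direction is harmless, since there $\lambda_2$ only needs to dominate a bounded quantity.) This is exactly why the paper abandons (\ref{abh}) in the string case and takes $u\equiv 0$ as the subsolution; verifying $f(0,x)<0$ is the technical heart of the proof (Lemma \ref{lm2}), and it is there --- not in any integrability of $W$ --- that $8\pi GN<1$ enters: at $u=0$ the negative nonlinear term has size $\sim \mathrm{e}^{16\pi Gc}|x|^{-16\pi GN-2}$ (the extra $|x|^{-2}$ coming from $|u_{0,\delta}|$), and it dominates the positive source $g\sim |x|^{-4}$ for large $|x|$ precisely when $16\pi GN<2$, with $c$ chosen large to handle the remaining compact region.

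Relatedly, your stated role for the hypothesis is backwards: $W\in L^1(\mathbb{R}^2)$ if and only if $16\pi GN>2$, i.e.\ $8\pi GN>1$, so under the theorem's hypothesis $W$ is \emph{not} integrable (nor is global integrability needed for interior $L^p$ estimates, which use only local bounds). Beyond the subsolution inequality, the paper uses $8\pi GN<1$ a second time, to get $\prod_s|x-p_s|^{-16\pi G}\in L^p_{\mathrm{loc}}$ for some $p>1$; this is needed because its supersolution $-u_{0,\delta}$ blows up at the points $p_s$ as $\delta\to 0$, which is also the only reason the $\delta$-regularization (\ref{rgove}) appears at all. A bounded supersolution such as yours would indeed allow one to skip both of those steps, but as written the proposal has a genuine gap: it lacks a valid subsolution. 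Once $u\equiv 0$ (with the verification of Lemma \ref{lm2}) replaces $u_{\lambda_1}$ and is paired with a supersolution lying above it (the paper's nonnegative $-u_{0,\delta}$ serves), the rest of your scheme --- shifted monotone iteration, sandwich bounds, diagonal limit --- does go through.
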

\subsection{Proof of Existence}
In this subsection, we give the proof of the Theorem \ref{cose}. Using the background functions (\ref{bk}), and let $u=v-u_0$, we can rewrite the equation (\ref{gove2}) to be
\begin{equation}\label{gove22}
\Delta u=\frac{2\mathrm e^{16\pi Gc-8\pi GF(\mathrm e^{u+u_0})+8\pi Gu}U(\mathrm e^{u+u_0})}{\sqrt{G(\mathrm{e}^{u+u_0}})\sqrt{1-\frac{1}{b^2}U^2(\mathrm{e}^{u+u_0})}}\prod_{s=1}^N(\frac{1}{1+|x-p_s|^2})^{8\pi G}+g
\end{equation}
We see that the existence of solution to equation (\ref{gove2}) is equivalent to equation (\ref{gove22}) with the same boundary condition. To establish the the existence result, we first consider the following regularized equation
\begin{equation}\label{rgove}
\Delta u=\frac{2\mathrm e^{8\pi G({2c-F(\mathrm e^{u+u_{0,\delta}})+u})}U(\mathrm e^{u+u_{0,\delta}})}{\sqrt{G(\mathrm{e}^{u+u_{0,\delta}}})\sqrt{1-\frac{1}{b^2}U^2(\mathrm{e}^{u+u_{0,\delta}})}}\prod_{s=1}^N(\frac{1}{1+|x-p_s|^2})^{8\pi G}+g
\end{equation}
where \[u_{0,\delta}=\sum_{s=1}^N\ln\frac{\delta+|x-p_s|^2}{1+|x-p_s|^2}\] and 
\[0\le\delta<1\]for the convenience we let 
\[f(u,x)=\frac{2\mathrm e^{8\pi G({2c-F(\mathrm e^{u+u_{0,\delta}})+u})}U(\mathrm e^{u+u_{0,\delta}})}{\sqrt{G(\mathrm{e}^{u+u_{0,\delta}}})\sqrt{1-\frac{1}{b^2}U^2(\mathrm{e}^{u+u_{0,\delta}})}}\prod_{s=1}^N(\frac{1}{1+|x-p_s|^2})^{8\pi G}+g\]
\begin{lemma}\label{lm1}
$-u_{0,\delta}$ is a supersolution of equation (\ref{rgove})
\end{lemma}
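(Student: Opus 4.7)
The plan is to verify the supersolution inequality $\Delta(-u_{0,\delta}) \le f(-u_{0,\delta}, x)$ by direct computation. The key observation is that the shift $u_{0,\delta}$ is constructed so that $u + u_{0,\delta} \equiv 0$ under the substitution $u = -u_{0,\delta}$; combined with $U(1)=0$ from Assumption \ref{as21}, this collapses the right-hand side of (\ref{rgove}) almost entirely.

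First, I would evaluate $f(-u_{0,\delta},x)$. Setting $u = -u_{0,\delta}$ gives $\mathrm{e}^{u+u_{0,\delta}} = 1$, so $U(\mathrm{e}^{u+u_{0,\delta}}) = U(1) = 0$, which makes the first summand of $f(u,x)$ vanish identically. Thus $f(-u_{0,\delta}, x) = g(x)$.

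Next, I would compute $\Delta u_{0,\delta}$ termwise. Using the elementary identity $\Delta \ln(a + |x-p|^2) = 4a/(a+|x-p|^2)^2$ on $\mathbb{R}^2$ (valid for any $a > 0$), applied to
\[ u_{0,\delta} = \sum_{s=1}^N \bigl[\ln(\delta+|x-p_s|^2) - \ln(1+|x-p_s|^2)\bigr], \]
yields
\[ \Delta u_{0,\delta} = \sum_{s=1}^N \frac{4\delta}{(\delta+|x-p_s|^2)^2} - g(x). \]
Hence $\Delta(-u_{0,\delta}) = g(x) - \sum_{s=1}^N 4\delta/(\delta+|x-p_s|^2)^2 \le g(x)$, because $\delta \ge 0$. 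Combining this with the first step gives the desired supersolution inequality, and since $u_{0,\delta}(x) \to 0$ as $|x|\to\infty$, the candidate has the correct behavior at infinity.

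There is no substantive obstacle here: the lemma is essentially an algebraic consequence of the normalization $U(1)=0$, which itself encodes the boundary condition $|\phi|=1$ at infinity. The point of using $u_{0,\delta}$ rather than $u_0$ is that for $\delta > 0$ the function $-u_{0,\delta}$ is smooth and bounded on all of $\mathbb{R}^2$, providing a clean upper barrier for the monotone iteration used to solve (\ref{rgove}); a matching subsolution will need to be produced separately (using the $\lambda$-parameter family as in the vortex case, with the smallness constraint $8\pi GN<1$ entering there) to bracket the iteration from below.
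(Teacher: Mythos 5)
Your proof is correct and follows essentially the same route as the paper: evaluate $f(-u_{0,\delta},x)=g$ using $U(1)=0$, compute $\Delta(-u_{0,\delta})=g-\sum_{s=1}^N 4\delta/(\delta+|x-p_s|^2)^2\le g$, and conclude the supersolution inequality. (Your closing remark about the lower barrier is tangential and slightly off — the paper's subsolution here is $u=0$, established in Lemma \ref{lm2} with the constraint $8\pi GN<1$, not a $\lambda$-parameter family as in the vortex case — but this does not affect the proof of the lemma itself.)
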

\begin{proof}
Since $U(\mathrm e^{-u_{0,\delta}+u_{0,\delta}})=U(\mathrm e^0)=U(1)=0$, we have $f(-u_{0,\delta},x)=g$. On the other hand, 
\[\Delta(-u_{0,\delta})=g-\sum_{s=1}^N\frac{4\delta}{(\delta+|x-p_s|^2)^2}<f(-u_{0,\delta},x)\]
Therefore $-u_{0,\delta}$ is a supersolution of equation $(4.6)$.
\end{proof}
\begin{lemma}\label{lm2}
If $8\pi G<1$, there exists $c_0>0$ such that when $c>c_0$, $u=0$ is a subsolution of equation (\ref{rgove}) for $0<\delta<\frac{1}{2}$.
\end{lemma}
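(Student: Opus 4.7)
The plan is to verify the pointwise inequality $\Delta(0)=0\ge f(0,x)$ on $\mathbb R^2$, uniformly in $\delta\in(0,1/2)$, provided $c$ is taken large enough. Since $0<\delta<1$ forces $u_{0,\delta}(x)<0$ throughout $\mathbb R^2$, one has $\mathrm e^{u_{0,\delta}(x)}\in(0,1)$, and assumption \ref{as21} ($U(1)=0$, $U'>0$) gives $U(\mathrm e^{u_{0,\delta}(x)})<0$. The first term of $f(0,x)$ is therefore strictly negative, and one may write $f(0,x)=-\mathrm e^{16\pi Gc}A(x)+g(x)$ with
\[
A(x)=\frac{-2\,\mathrm e^{-8\pi G F(\mathrm e^{u_{0,\delta}})}U(\mathrm e^{u_{0,\delta}})}{\sqrt{G(\mathrm e^{u_{0,\delta}})}\sqrt{1-U^2(\mathrm e^{u_{0,\delta}})/b^2}}\prod_{s=1}^N(1+|x-p_s|^2)^{-8\pi G}>0.
\]
The subsolution condition $f(0,x)\le 0$ then reduces to $\mathrm e^{16\pi Gc}\ge g(x)/A(x)$, so it suffices to prove
\[
\Lambda:=\sup_{x\in\mathbb R^2,\;\delta\in(0,1/2)}\frac{g(x)}{A(x)}<\infty,
\]
whereupon one sets $c_0:=(16\pi G)^{-1}\log\Lambda$.

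To estimate $g/A$ I would split $\mathbb R^2$ into a large ball and its complement. For $|x|\to\infty$ a direct expansion gives $u_{0,\delta}(x)=-N(1-\delta)/|x|^2+O(|x|^{-4})$, so $|U(\mathrm e^{u_{0,\delta}(x)})|\sim U'(1)N(1-\delta)/|x|^2$; combined with $\prod_s(1+|x-p_s|^2)^{-8\pi G}\sim |x|^{-16\pi GN}$ and $g(x)\sim 4N/|x|^4$, this yields $g(x)/A(x)=O(|x|^{16\pi GN-2})$, which decays to $0$ provided $8\pi GN<1$ (the standing hypothesis of Theorem \ref{costh}). On a closed ball $\overline{B_R}$, the factors $\sqrt{G}$, $\sqrt{1-U^2/b^2}$, $\mathrm e^{-8\pi GF}$, and $\prod_s(1+|x-p_s|^2)^{-8\pi G}$ are continuous and bounded above and below away from $0$ uniformly in $\delta\in[0,1/2]$, so the only delicate factor is $|U(\mathrm e^{u_{0,\delta}(x)})|$. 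Here I would exploit the monotonicity $\partial_\delta u_{0,\delta}=\sum_s(\delta+|x-p_s|^2)^{-1}>0$, which squeezes $u_{0,\delta}$ between $u_{0,0}$ and $u_{0,1/2}$ on any compact set away from the vortices, both strictly negative; at a vortex $p_s$ one uses $\mathrm e^{u_{0,\delta}(p_s)}=\delta\prod_{s'\neq s}(\delta+|p_s-p_{s'}|^2)/(1+|p_s-p_{s'}|^2)\to 0$, so $|U(\mathrm e^{u_{0,\delta}(p_s)})|\to |U(0)|>0$ uniformly as $\delta\to 0^+$. Combining the two regimes gives $\Lambda<\infty$.

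The main obstacle is obtaining uniformity in $\delta\in(0,1/2)$ as $\delta\to 0^+$, since the regularizing background $u_{0,\delta}$ degenerates into the unbounded function $u_0$ at each vortex and standard compactness on $\overline{B_R}\times[0,1/2]$ does not directly apply. The trick that saves the argument is that $U$ is continuous on the full interval $[0,1]$ with $U(0)\neq 0$: the blow-down $u_{0,\delta}(p_s)\to -\infty$ drives $\mathrm e^{u_{0,\delta}(p_s)}$ to $0$, which actually produces the \emph{largest} value $|U(0)|$ rather than a small one, so $A(p_s)$ stays bounded below. The asymptotic smallness condition $8\pi GN<1$ (which in particular implies $8\pi G<1$) is precisely what makes the exponent $16\pi GN-2$ negative and closes the argument at infinity.
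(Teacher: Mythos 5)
Your proposal is correct and follows essentially the same route as the paper's own proof: both reduce the subsolution condition to $f(0,x)\le 0$, split $\mathbb{R}^2$ into a far region where the negative term dominates $g$ because its decay rate $|x|^{-16\pi GN-2}$ beats the $|x|^{-4}$ decay of $g$ under $8\pi GN<1$, and a compact region where $|U(\mathrm{e}^{u_{0,\delta}})|$ is bounded below uniformly in $\delta\in(0,\tfrac12)$ (via monotonicity of $u_{0,\delta}$ in $\delta$ and $U(0)<0$) so that taking $c$ large closes the inequality. Your packaging as a uniform bound $\Lambda$ on the ratio $g/A$ with $c_0=(16\pi G)^{-1}\log\Lambda$ is only a cosmetic reorganization of the paper's two-region estimate; note that, like the paper, you in fact use the hypothesis $8\pi GN<1$ from Theorem \ref{costh} rather than the weaker $8\pi G<1$ stated in the lemma.
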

\begin{proof}
Since $\Delta 0=0$, in order to show that $0$ is a subsolution, we need to show that $0>f(0,x)$.
\begin{align}
f(0,x)&=\prod_{s=1}^N(\frac{1}{1+|x-p_s|^2})^{8\pi G}\frac{2\mathrm e^{8\pi G({2c-F(\mathrm e^{u_{0,\delta}})})}U(\mathrm e^{u_{0,\delta}})}{\sqrt{G(\mathrm{e}^{u_{0,\delta}}})\sqrt{1-\frac{1}{b^2}U^2(\mathrm{e}^{u_{0,\delta}})}}+g\notag\\
&=2\mathrm e^{16\pi Gc}\prod_{s=1}^N(\frac{1}{1+|x-p_s|^2})^{8\pi G}\frac{\mathrm e^{{-8\pi GF(\mathrm e^{u_{0,\delta}}})}U(\mathrm e^{u_{0,\delta}})}{\sqrt{G(\mathrm{e}^{u_{0,\delta}}})\sqrt{1-\frac{1}{b^2}U^2(\mathrm{e}^{u_{0,\delta}})}}+g\notag
\end{align}
Since $u_{0,\delta}<0$, we have $\frac{U(\mathrm e^{u_{0,\delta}})}{\sqrt{1-\frac{1}{b^2}U^2(\mathrm{e}^{u_{0,\delta}})}}<U(\mathrm e^{u_{0,\delta}})<0$, therefore 
\[f(0,x)<2\pi\mathrm e^{16\pi Gc}\prod_{s=1}^N(\frac{1}{1+|x-p_s|^2})^{8\pi G}\Big(\frac{\mathrm e^{{-8\pi GF(\mathrm e^{u_{0,\delta}}})}U(\mathrm e^{u_{0,\delta}})}{\sqrt{G(\mathrm{e}^{u_{0,\delta}}})}\Big)+g\]
Since $U(1)=0$, we have 
\[U(\mathrm e^{u_{0,\delta}})=U(\mathrm e^{u_{0,\delta}})-U(\mathrm e^0)=U'(\mathrm e^\xi)\mathrm e^\xi u_{0,\delta}\] 
where $u_{0,\delta} <\xi<0$. Moreover, by the assumption \ref{as23}, we have $F(\mathrm e^{u_{0,\delta}})<F(\mathrm e^0)=F(1)$, therefore 
\[\mathrm e^{-8\pi GF(u_{0,\delta})} U(u_{0,\delta})<\mathrm e^{-8\pi GF(1)}U(u_{0,\delta})=\mathrm e^{-8\pi GF(1)}U'(\mathrm e^\xi)\mathrm e^\xi u_{0,\delta} \] and
\begin{align}
2&\pi\mathrm e^{16\pi Gc}\prod_{s=1}^N(\frac{1}{1+|x-p_s|^2})^{8\pi G}\Big(\frac{\mathrm e^{{-8\pi GF(\mathrm e^{u_{0,\delta}}})}U(\mathrm e^{u_{0,\delta}})}{\sqrt{G(\mathrm{e}^{u_{0,\delta}}})}\Big)+g\notag\\
&<2\pi\mathrm e^{16\pi Gc}\prod_{s=1}^N(\frac{1}{1+|x-p_s|^2})^{8\pi G}\Big(\frac{\mathrm e^{-8\pi GF(1)}U'(\mathrm e^\xi)\mathrm e^\xi u_{0,\delta}}{\sqrt{G(\mathrm{e}^{u_{0,\delta}}})}\Big)+g\notag
\end{align}
On the other hand, 
\begin{align}
u_{0,\delta}&=\sum_{s=1}^N\ln\frac{\delta+|x-p_s|^2}{1+|x-p_s|^2}\notag\\
&=-\sum_{s=1}^N\ln\frac{1+|x-p_s|^2}{\delta+|x-p_s|^2}\notag\\
&=-\sum_{s=1}^N\frac{1}{1+\zeta_s}\frac{1-\delta}{\delta+|x-p_s|^2}\notag
\end{align}
where $0<\zeta_s<\frac{1-\delta}{\delta+|x-p_s|^2}$, $s=1,2...N$. Assuming that $0<\delta<\frac{1}{2}$, then $u_{0,\delta}\to0$ uniformly as $|x|\to\infty$.
Therefore we have,
\begin{align}
2&\pi\mathrm e^{16\pi Gc}\prod_{s=1}^N(\frac{1}{1+|x-p_s|^2})^{8\pi G}\Big(\frac{\mathrm e^{-8\pi GF(1)}U'(\mathrm e^\xi)\mathrm e^\xi u_{0,\delta}}{\sqrt{G(\mathrm{e}^{u_{0,\delta}}})}\Big)+g\notag\\
=-2&\pi\mathrm e^{16\pi Gc}\prod_{s=1}^N(\frac{1}{1+|x-p_s|^2})^{8\pi G}\Big(\frac{\mathrm e^{-8\pi GF(1)}U'(\mathrm e^\xi)\mathrm e^\xi}{\sqrt{G(\mathrm{e}^{u_{0,\delta}}})}\Big)\sum_{s=1}^N\frac{1}{1+\zeta_s}\frac{1-\delta}{\delta+|x-p_s|^2}+g\notag
\end{align}
Let \[\tilde f(u,x)=-2\pi\mathrm e^{16\pi Gc}\prod_{s=1}^N(\frac{1}{1+|x-p_s|^2})^{8\pi G}\Big(\frac{\mathrm e^{-8\pi GF(1)}U'(\mathrm e^\xi)\mathrm e^\xi}{\sqrt{G(\mathrm{e}^{u_{0,\delta}}})}\Big)\sum_{s=1}^N\frac{1}{1+\zeta_s}\frac{1-\delta}{\delta+|x-p_s|^2}\]
since $G(t)$ is positive and bounded in $[0,1]$, asymptotically we have
\[|x|^4\tilde f(u,x)\to\infty \] 
uniformly as $|x|\to\infty$ if $16\pi GN<2$.
Therefore there exists $r_0>0$ and $c_0>0$, large enough such that
\begin{align}
f(0,x)&<2\pi\mathrm e^{16\pi Gc}\prod_{s=1}^N(\frac{1}{1+|x-p_s|^2})^{8\pi G}\Big(\frac{\mathrm e^{{-8\pi GF(\mathrm e^{u_{0,\delta}}})}U(\mathrm e^{u_{0,\delta}})}{\sqrt{G(\mathrm{e}^{u_{0,\delta}}})}\Big)+g\notag\\
&<-2\pi\mathrm e^{16\pi Gc}\prod_{s=1}^N(\frac{1}{1+|x-p_s|^2})^{8\pi G}\Big(\frac{\mathrm e^{-8\pi GF(1)}U'(\mathrm e^\xi)\mathrm e^\xi}{\sqrt{G(\mathrm{e}^{u_{0,\delta}}})}\Big)\sum_{s=1}^N\frac{1}{1+\zeta_s}\frac{1-\delta}{\delta+|x-p_s|^2}+g\notag\\
&=\frac{1}{|x|^4}(-2\pi\mathrm e^{16\pi Gc}|x|^4\tilde f(u,x)+|x|^4g)<0\notag
\end{align}
when $|x|>r_0$ and $c>c_0$.

When $|x|<r_0$, $0<\delta<\frac{1}{2}$, so we can choose $c$ large enough such that 
\begin{align}
f(0,x)&<2\pi\mathrm e^{16\pi Gc}\prod_{s=1}^N(\frac{1}{1+|x-p_s|^2})^{8\pi G}\Big(\frac{\mathrm e^{{-8\pi GF(\mathrm e^{u_{0,\delta}}})}U(\mathrm e^{u_{0,\delta}})}{\sqrt{G(\mathrm{e}^{u_{0,\delta}}})}\Big)+g\notag\\
&<2\pi\mathrm e^{16\pi Gc}\prod_{s=1}^N(\frac{1}{1+|x-p_s|^2})^{8\pi G}\Big(\frac{\mathrm e^{{-8\pi GF(1)}}U(\mathrm e^{u_{0, 1/2}})}{\sqrt{M_G}}\Big)+g\notag
\end{align}
We can choose $c>c_0$ large enough so that $f(0,x)<0$ when $|x|<r_0$. Therefore $u=0$ is a subsolution of equation (\ref{rgove}) for $0<\delta<\frac{1}{2}$.
\end{proof}

From lemmas (\ref{lm1}) and (\ref{lm2}) we see that the equation (\ref{rgove}) has a solution $u_{\delta}$ for $0<\delta<\frac{1}{2}$. Now let $\{u_{\delta}\}$ be a family of solutions of the equation, we would like to pass the limit $\delta\to0$ and show that the limiting function exists and is exactly the solution of equation (\ref{gove22}). 

\begin{proof}[Proof of Theorem \ref{cose}]
Let $u_{\delta}$ be the solution of the equation (\ref{rgove}), we have 
\[ \Delta u_{\delta}=f(u_{\delta},x)\] 
recall that $f(u,x)$ is defined to be 
\[f(u,x)=\prod_{s=1}^N(\frac{1}{1+|x-p_s|^2})^{8\pi G}\frac{2\mathrm e^{8\pi G({2c-F(\mathrm e^{u+u_{0,\delta}})+u})}U(\mathrm e^{u+u_{0,\delta}})}{\sqrt{G(\mathrm{e}^{u+u_{0,\delta}}})\sqrt{1-\frac{1}{b^2}U^2(\mathrm{e}^{u+u_{0,\delta}})}}+g\]
Since $-u_{0,\delta}$ is a supersolution and $0$ is a subsolution, we have 
\[-u_0>-u_{0,\delta}\ge u_{\delta}\ge0\quad\quad \forall x\in\mathbb R^2\]
We now consider
\begin{align}
f(u_{\delta},x)&=\prod_{s=1}^N(\frac{1}{1+|x-p_s|^2})^{8\pi G}\frac{2\mathrm e^{8\pi G({2c-F(\mathrm e^{u_\delta+u_{0,\delta}})+u_\delta})}U(\mathrm e^{u_\delta+u_{0,\delta}})}{\sqrt{G(\mathrm{e}^{u_\delta+u_{0,\delta}}})\sqrt{1-\frac{1}{b^2}U^2(\mathrm{e}^{u_\delta+u_{0,\delta}})}}+g\notag\\
&=\mathrm e^{8\pi Gu_\delta}\prod_{s=1}^N(\frac{1}{1+|x-p_s|^2})^{8\pi G}\frac{2\mathrm e^{8\pi G({2c-F(\mathrm e^{u_\delta+u_{0,\delta}})})}U(\mathrm e^{u_\delta+u_{0,\delta}})}{\sqrt{G(\mathrm{e}^{u_\delta+u_{0,\delta}}})\sqrt{1-\frac{1}{b^2}U^2(\mathrm{e}^{u_\delta+u_{0,\delta}})}}+g\notag
\end{align}
Since $u_\delta+u_{0,\delta}<0$, from assumption \ref{as21}, \ref{as22}, and \ref{as23} we see that the term
\[\frac{2\mathrm e^{8\pi G({2c-F(\mathrm e^{u_\delta+u_{0,\delta}})})}U(\mathrm e^{u_\delta+u_{0,\delta}})}{\sqrt{G(\mathrm{e}^{u_\delta+u_{0,\delta}}})\sqrt{1-\frac{1}{b^2}U^2(\mathrm{e}^{u_\delta+u_{0,\delta}})}}\]
is bounded. Moreover
\begin{equation}
\prod_{s=1}^N(\frac{1}{1+|x-p_s|^2})^{8\pi G}\mathrm{e}^{8\pi Gu_{\delta}}\notag<\prod_{s=1}^N(\frac{1}{1+|x-p_s|^2})^{8\pi G}\mathrm{e}^{-8\pi Gu_0}<\prod_{s=1}^N(\frac{1}{|x-p_s|^2})^{8\pi G}
\end{equation}
When $8\pi GN<1$, there exists $p>1$ such that $\prod_{s=1}^N(\frac{1}{|x-p_s|^2})^{8\pi G}\in L_{loc}^p(\mathbb R^2)$, therefore for such $p>1$ and any bounded domain $\mathcal O\subseteq \mathbb R^2$ we have $||f(u_\delta, x)||_{L^p(\mathcal O)}\le C_1(p, \mathcal O)$ for some constant $C_1(p, \mathcal O)$ independent of $\delta$. We can then show that $u_\delta$ is bounded in $C^{2,\alpha}(\bar{\mathcal O})$ by a bootstrap argument. By $L^p$-estimates we have $||u_\delta||_{W^{2,p}(\mathcal O)}\le C_2(p, \mathcal O)$. From Sobolev Embedding we have $u_\delta$ is bounded in $C(\bar{\mathcal O})$. And threrfore $u_\delta$ is uniformly bounded in $\mathbb R^2$ as $0\le u_{\delta}<-u_0$. So $f(u_\delta, x)$ is also bounded in $\mathbb R^2$. By elliptic interior estimates, for every $p>1$ and bounded set $\mathcal O$, there exists $M$ such that $||u_\delta||_{W^{2,p}(\mathcal O)}\le M$. Take $p>2$ and again by Sobolev Embedding we have $u_\delta$ is uniformly bounded in $C^1(\bar{\mathcal O})$. Then $f(u_\delta,x)$ is also bounded in $C^1(\bar{\mathcal O})$. Therefore, by Schauder estimates we get $u_\delta$ is bounded in $C^{2,\alpha}(\bar{\mathcal O})$.

We now show the existence of solution to equation (\ref{gove22}) by a diagonal process. Let $\{\delta_{ij}\}$ be a sequence such that $\delta_{ij}\rightarrow 0$ as $j\rightarrow \infty$. We consider a family of balls $B_i=\{x\in\mathbb R^2: |x|\le i\}$. Since the embedding $C^{2,\alpha}(B_1)\rightarrow C^2(B_1)$ is compact, there exists a sequence $\{u_{\delta_{1j}}\}^{\infty}_{j=1}$ which converges in $C^2(B_1)$. Let $\lim_{j\to\infty} u_{\delta_{1j}}=u_1(x)$, $x\in B_1$. Repeat this procedure, we can extract a subsequence $\{\delta_{2j}\}\subset\{\delta_{1j}\}$ such that $u_{\delta_{2j}}$ converges to $u_2$ in $C^2(B_2)$ and $u_2|_{B_1}=u_1$. By repeating this procedure we can further get a subsequence $\{u_{\delta_{ij}}\}^{\infty}_{j=1}\subset\{u_{\delta_{(i-1)j}}\}^{\infty}_{j=1}$ which converges in $C^2(B_i)$. Let $u_i(x)=\lim_{j\to\infty}u_{\delta_{ij}}$, then $u_i$ is a solution to equation (\ref{gove22}) in $B_i$ and $u_i|_{B_{i-1}}=u_{i-1}$. Moreover, $\{u_{\delta_{nn}}\}$ is a subsequence of $u_{\delta_{ij}}$ for every $i$, and $\lim_{n\to\infty}u_{\delta_{nn}}=u_i$ in $B_i$. Let $u=\lim_{n\to\infty}u_{\delta_{nn}}$, then $u$ is a solution of the equation (\ref{gove22}) on $\mathbb R^2$. Since $0\le u\le -u_0$, we have $u$ converges to zero at the infinity. As a consequence, $v=u+u_0$ is a solution to equation (\ref{gove2}). This completes the proof of theorem \ref{cose}.
\end{proof}

\subsection{Asymptotic estimates}
In this section we discuss the asymptotic behavior of the solution $v$. We first consider the conformal factor 
\begin{equation}
\mathrm e^{\eta}=\mathrm e^{16\pi Gc-8\pi GF(\mathrm e^v)+8\pi Gv}\Big(\prod_{s=1}^N |x-p_s|^{-2}\Big)^{8\pi G}
\end{equation}
since $v\rightarrow0$ as $|x|\rightarrow+\infty$, asymptotically we have 
\[\mathrm{e}^{\eta}=O(|x|^{-16\pi GN}) \quad\quad \text{as} \quad|x|\rightarrow+\infty \]
For solution $v$ to the equation (\ref{cos22}), and $|\nabla v|^2$, we have the following decay estimates:

\begin{theorem}
If $8\pi GN<1$, then the solution $v$ to equation (\ref{cos22}) has the bound
\[|v(x)|<C_1|x|^{-b}\quad\quad |x|>r_0\]
\[|\nabla v|^2<C_2|x|^{-b}\quad\quad |x|>r_0\]
for any $b>0$ and some large $r_0>0$, the values of constants $C_1$, $C_2$ depend on b. Moreover, we have
\[|\phi|^2-1=O(|x|^{-b}),\quad F_{12}=O(|x|^{-b}),\quad |D_\mu\phi|=O(|x|^{-b}),\quad\mu=1,2\]
and the quantized energy and total curvature
\[\int_{\mathbb R^2} E=\int_{\mathbb R^2}F_{12}=2\pi N\quad \int _{\mathbb R^2}K_g\mathrm d \Omega_g=16\pi^2GN\]
\end{theorem}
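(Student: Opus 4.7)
The plan has four pieces: (i) polynomial decay of $v$ by a barrier/comparison argument, (ii) gradient decay by interior elliptic regularity, (iii) decay of the physical fields by substitution, and (iv) evaluation of the three integrals via the divergence theorem together with the self-dual identity $\mathcal H\,\mathrm e^{\eta}=F_{12}+J_{12}$ derived in Section 2. The crucial input is the conformal factor asymptotics $\mathrm e^{\eta}=O(|x|^{-16\pi GN})$ recorded just before the theorem, in combination with $16\pi GN<2$, which follows from the hypothesis $8\pi GN<1$.

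For (i), fix $r_0$ so that $\overline{B_{r_0}}$ contains all $p_i$; on the complement, equation (\ref{cos22}) becomes $\Delta v=K(x)\,v$, where applying the mean value theorem to $U(\mathrm e^v)-U(1)=U(\mathrm e^v)$ yields
\[K(x)=\frac{2\,\mathrm e^{\eta(x)}\,U'(\mathrm e^{\xi(x)})\,\mathrm e^{\xi(x)}}{\sqrt{G(\mathrm e^{v(x)})}\sqrt{1-U^2(\mathrm e^{v(x)})/b^2}},\]
with $\xi(x)$ between $v(x)$ and $0$. Since $v\to 0$ at infinity, the factor after $\mathrm e^{\eta}$ tends to $2U'(1)/\sqrt{G(1)}>0$, so $K(x)\sim C|x|^{-16\pi GN}$ and $K(x)\,|x|^2\to\infty$. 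Therefore, given any $b>0$, we may enlarge $r_0$ so that $K(x)\ge b^2/|x|^2$ on $|x|\ge r_0$. Putting $w=-v>0$ (which satisfies the same equation) and choosing $A$ so that $\psi(x):=A|x|^{-b}\ge w$ on $\{|x|=r_0\}$, wherever $\psi<w$ and $|x|\ge r_0$ one has
\[\Delta(\psi-w)=Ab^2|x|^{-b-2}-Kw<A|x|^{-b}\bigl(b^2|x|^{-2}-K\bigr)\le 0,\]
so $\psi-w$ cannot attain a negative interior minimum; combined with $\psi-w\ge 0$ on $|x|=r_0$ and $\psi-w\to 0$ at infinity, the maximum principle gives $w\le\psi$, i.e., $|v(x)|\le A|x|^{-b}$.

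For (ii), apply $W^{2,p}$ interior estimates on the unit ball $B(x_0,1)$ for large $|x_0|$: since $|v|=O(|x_0|^{-b})$ and $|\Delta v|=O(|x_0|^{-b-16\pi GN})$ there, one gets $\|v\|_{W^{2,p}(B(x_0,1/2))}=O(|x_0|^{-b})$, and Sobolev embedding with $p>2$ yields $|\nabla v(x_0)|=O(|x_0|^{-b})$. In particular $|\nabla v|^2=O(|x|^{-b})$ for any $b>0$ (replace $b$ by $b/2$). Step (iii) is then immediate: $|\phi|^2-1=\mathrm e^v-1=O(|v|)$; by (\ref{cos1}), $|F_{12}|\le C\mathrm e^{\eta}|U(\mathrm e^v)|=O(|x|^{-b-16\pi GN})$; and from $|D_1\phi|^2+|D_2\phi|^2=\tfrac12\mathrm e^v|\nabla v|^2$ we read off $|D_\mu\phi|=O(|\nabla v|)$. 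For (iv), combining (\ref{cos1}) with (\ref{cos22}) and writing $u=v-u_0$ gives $\Delta u=-2F_{12}+g$; integrating, the boundary flux at $|x|=r\to\infty$ vanishes by the gradient decay of $u$, and $\int g=4\pi N$ gives $\int F_{12}=2\pi N$. The self-dual saturation (\ref{sph}) gives $\mathcal H\,\mathrm e^{\eta}=F_{12}+J_{12}$ on solutions, and $\int J_{12}=\lim_{r\to\infty}\oint_{|x|=r}J_k\,dx^k=0$ by the same decay, so $E=\int\mathcal H\,\mathrm e^{\eta}\,dx=2\pi N$. Finally, $K_g=8\pi G\mathcal H$ gives $\int K_g\,d\Omega_g=8\pi G\cdot 2\pi N=16\pi^2 GN$.

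The main obstacle is step (i): the barrier $A|x|^{-b}$ for arbitrary $b$ is available only because $K(x)\sim|x|^{-16\pi GN}$ dominates $b^2/|x|^2$ at infinity, which requires precisely $16\pi GN<2$. Violating the topological bound $8\pi GN<1$ would make $\mathrm e^{\eta}$ decay too fast, $K(x)$ would be negligible compared to $b^2/|x|^2$, and the comparison would collapse. This is the analytic source of the upper bound on the number of strings and the reason why gravity replaces the exponential decay of the flat vortex case with polynomial decay of every order.
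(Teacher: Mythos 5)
Your proposal is correct, and its overall architecture --- barrier for $v$, then gradient decay, then decay of the physical fields, then evaluation of the flux, energy, and total curvature --- matches the paper's. Steps (i), (iii), and (iv) essentially reproduce the paper's argument: the paper uses the same comparison function $C|x|^{-b}$, the same mean-value linearization $U(\mathrm e^v)=U'(\mathrm e^\xi)\mathrm e^\xi v$, and the same observation that $\mathrm e^{\eta}\sim |x|^{-16\pi GN}$ dominates $b^2|x|^{-2}$ precisely because $16\pi GN<2$; its computation of $\int F_{12}=\frac12\int g=2\pi N$, $\int J_{12}=\lim_r\oint J_k\,\mathrm dx^k=0$, $E=2\pi N$, and $\int K_g\,\mathrm d\Omega_g=8\pi G\,E=16\pi^2GN$ is the one you give, with your identity $\Delta u=-2F_{12}+g$ being a compact repackaging of the substitution the paper performs (by reference back to the vortex section). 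The genuine difference is step (ii). The paper differentiates the reduced equation (\ref{asy}), obtaining the equation (\ref{asy2}) for $\partial_j v$ whose right-hand side contains an inhomogeneous term coming from $\partial_j\eta$ (in particular from $\partial_j\ln\prod_s|x-p_s|^2$), and then runs a second maximum-principle comparison on $(\partial_j v)^2$ against the barrier $C_2|x|^{-2b}$. You instead obtain $|\nabla v(x_0)|=O(|x_0|^{-b})$ from translated interior $W^{2,p}$ estimates on unit balls plus Sobolev embedding with $p>2$, the constants being uniform in $x_0$ by translation invariance. Your route is cleaner: it requires no control of the differentiated coefficients or of the inhomogeneous term (which the paper handles somewhat loosely through the bounded function $\alpha(x)$ and the large exponent $\alpha$), and it upgrades the already-established decay of $v$ to $C^1$ decay by purely local estimates. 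What the paper's route buys is that it stays entirely within maximum-principle technology and produces the decay of $(\partial_j v)^2$ directly; both arguments deliver the same conclusion, and both hinge on the same topological constraint $8\pi GN<1$ that you correctly identify as the analytic source of the bound on the number of strings.
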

\begin{proof}
For the prescribed zeros $p_i$, we can choose $r_0>0$ suitably large such that $\{p_1,p_2,...,p_N\}\subset\{x\in\mathbb R^2:x<r_0\}$. Let $\Omega=\mathbb R^2\setminus\overline{B(r_0)}$, we have
\begin{equation}\label{asy}
\Delta v=\frac{2\mathrm{e}^{\eta}U(\mathrm{e}^v)}{\sqrt{G(\mathrm{e}^v)}\sqrt{1-\frac{1}{b^2}U^2(\mathrm{e}^v)}}\quad\quad x\in\Omega
\end{equation}
Choose comparison function $w_1(x)=C|x|^{-b}$, then we have
\begin{equation}
\Delta w_1=b^2|x|^{-2}w_1 \quad |x|>r_0
\end{equation} 
Therefore, when $|x|>r_0$, we have
\begin{align}
\Delta (v+w_1)&=\frac{2\mathrm{e}^{\eta}U(\mathrm{e}^v)}{\sqrt{G(\mathrm{e}^v)}\sqrt{1-\frac{1}{b^2}U^2(\mathrm{e}^v)}}+b^2|x|^{-2}w_1\notag\\
&=\frac{2\mathrm{e}^{\eta}U'(\mathrm{e}^\xi)\mathrm{e}^\xi v}{\sqrt{G(\mathrm{e}^v)}\sqrt{1-\frac{1}{b^2}U^2(\mathrm{e}^v)}}+b^2|x|^{-2}w_1\notag
\end{align}
where $v<\xi<0$. 
Since $\mathrm{e}^{\eta}=O(|x|^{-16\pi GN})$ when $|x|\rightarrow+\infty$, and $16\pi GN<2$, when $|x|>r_0$ is large enough, we have
\[\frac{2\mathrm{e}^{\eta}U'(\mathrm{e}^\xi)\mathrm{e}^\xi v}{\sqrt{G(\mathrm{e}^v)}\sqrt{1-\frac{1}{b^2}U^2(\mathrm{e}^v)}}<b^2|x|^{-2}v\]
Therefore
\begin{equation}
\Delta (v+w_1)<b^2|x|^{-2}(v+w_1) \quad |x|>r_0
\end{equation}
On the other hand, we can always find $C>0$ such that $(v+w_1)|_{|x|=r_0}>0$, thus we have $v+w_1>0$, $\forall x\in\mathbb R^2\setminus\overline{B(r_0)}$ by maximum principle. Therefore, for any $b>0$, we can find $C_b>0$ such that $-C_b|x|^{-b}<v(x)<0$ when $x>r_0$.

Similarly, we can get the the asymptotic behavior of $|\nabla v|^2$. Since $v\in L^2(\Omega)$, from equation (\ref{asy}) and $L^2$-estimates we have $v\in W^{2,2}(\Omega)$. By differentiating equation (\ref{asy}), we get
\begin{equation}\label{asy2}
\begin{aligned}
\Delta(\partial_jv)&=2\mathrm{e}^{\eta}\frac{2U'(\mathrm{e}^v)G(\mathrm{e}^v)\mathrm{e}^v-U(\mathrm{e}^v)G'(\mathrm{e}^v)\mathrm{e}^v+b^{-2}U^3(\mathrm{e}^v)G'(\mathrm{e}^v)\mathrm{e}^v}{2\Big(G(\mathrm{e}^v)(1-b^{-2}U^2(\mathrm{e}^v))\Big)^{\frac{3}{2}}}\partial_jv\\
&\quad+\frac{2\mathrm{e}^{\eta}U(\mathrm{e}^v)}{\sqrt{G(\mathrm{e}^v)}\sqrt{1-\frac{1}{b^2}U^2(\mathrm{e}^v)}}\partial_j\eta\\
&=2\mathrm{e}^{\eta}\frac{2U'(\mathrm{e}^v)G(\mathrm{e}^v)\mathrm{e}^v-U(\mathrm{e}^v)G'(\mathrm{e}^v)\mathrm{e}^v+b^{-2}U^3(\mathrm{e}^v)G'(\mathrm{e}^v)\mathrm{e}^v}{2\Big(G(\mathrm{e}^v)(1-b^{-2}U^2(\mathrm{e}^v))\Big)^{\frac{3}{2}}}\partial_jv\\
&\quad+\frac{2\mathrm{e}^{\eta}U(\mathrm{e}^v)}{\sqrt{G(\mathrm{e}^v)}\sqrt{1-\frac{1}{b^2}U^2(\mathrm{e}^v)}}\Big(8\pi G(-F'(\mathrm{e}^v)\mathrm{e}^v+1)\partial_jv-8\pi G\partial_j\Big(\ln\Big(\prod_{s=1}^N|x-p_s|^2\Big)\Big)\Big)\\
&=\Big(\Big(\frac{2U'(\mathrm{e}^v)G(\mathrm{e}^v)\mathrm{e}^v-U(\mathrm{e}^v)G'(\mathrm{e}^v)\mathrm{e}^v+b^{-2}U^3(\mathrm{e}^v)G'(\mathrm{e}^v)\mathrm{e}^v}{\Big(G(\mathrm{e}^v)(1-b^{-2}U^2(\mathrm{e}^v))\Big)^{\frac{3}{2}}}\Big)\\
&\quad+\frac{U(\mathrm{e}^v)}{\sqrt{G(\mathrm{e}^v)}\sqrt{1-\frac{1}{b^2}U^2(\mathrm{e}^v)}}16\pi G(-F'(\mathrm{e}^v)\mathrm{e}^v+1)\Big)\mathrm{e}^{\eta}\partial_jv\\
&\quad-\frac{16\pi GU(\mathrm{e}^v)}{\sqrt{G(\mathrm{e}^v)}\sqrt{1-\frac{1}{b^2}U^2(\mathrm{e}^v)}}\mathrm{e}^{\eta}\partial_j\Big(\ln\Big(\prod_{s=1}^N|x-p_s|^2\Big)\Big)
\end{aligned}
\end{equation}
The RHS of (\ref{asy2}) is also in $L^2(\Omega)$, therefore we have $\partial_{j}v\in W^{2,2}(\Omega)$. Moreover, asymptotically we have 
\[\Delta(\partial_jv)=M|x|^{-16\pi GN}\partial_jv+\alpha(x)|x|^{-\alpha}\]
where $M$ and $\alpha$ are large constant, and $\alpha(x)$ is a bounded function. Therefore, asymptotically we have
\[\Delta (\partial v)^2\ge C|x|^{-16\pi GN}(\partial_jv)^2-c_\alpha|x|^{-\alpha}\]
for some large $C$ and $\alpha$. Choose comparison function $w_2(x)=C_2|x|^{-2b}$, then 
\[\Delta w_2=4b^2w_2|x|^{-2}\]
When $|x|$ is large enough we have,
\begin{equation}
\begin{aligned}
\Delta\Big((\partial_jv)^2-w_2\Big)&\ge C|x|^{-16\pi GN}(\partial_jv)^2-4b^2w_2|x|^{-2}-c_{\alpha}|x|^{-\alpha}\\
&=C|x|^{-16\pi GN}(\partial_jv)^2-5b^2w_2|x|^{-2}+(b^2w_2|x|^{-2}-c_{\alpha}|x|^{-\alpha})
\end{aligned}
\end{equation}
For any $b>0$, we can find suitably large $\alpha>0$ and choose $C_2>0$ such that \[b^2w_2|x|^{-2}-c_{\alpha}|x|^{-\alpha}>0\] Therefore, for some large $r_0>0$, we have
\begin{equation}
\Delta\Big((\partial_jv)^2-w_2\Big)\ge 5b^2|x|^{-2}\Big((\partial_jv)^2-w_2\Big) \quad x\in \Omega
\end{equation}
Moreover, when $C_2$ is large enough, we have $((\partial_jv)^2-w_2)|_{|x|=r_0}<0$. Thus, by maximum principle we get $(\partial_jv)^2\le w_2$ when $|x|>r_0$. Therefore, $|\nabla v|^2<C_2|x|^{-b}$ when $|x|>r_0$. As a consequence, from formulae (\ref{fml}), we have
 \[|\phi|^2-1=O(|x|^{-b}),\quad F_{12}=O(|x|^{-b}),\quad |D_\mu\phi|=O(|x|^{-b}),\quad\mu=1,2\]
Similar to the vortex solution case, we can again use the decay estimates to get the quantized energy
\begin{equation}
E=\int_{\mathbb R^2}\mathcal H\mathrm d\Omega_g=\int_{\mathbb R^2} F_{12}=2\pi N
\end{equation}
From the Einstein equation (\ref{cos3}), we get 
\begin{equation}
\int_{\mathbb R^2}K_g\mathrm d\Omega_g=16\pi^2GN
\end{equation}
\end{proof}

In conclusion, we have established the existence of the vortex solution arising in the generalized Born-Infeld model (\ref{ac}) in a general class of modifying functions $U(t)$, $G(t)$, and $w(t)$ and extended the results in \cite{gbf, han16}. The solution satisfies the Euler-Lagrange equations (\ref{eq1}), (\ref{eq2}) of the model. It is interesting to note that the condition $U(1)=0$ plays an important role in these results. Moreover, we have shown the existence of cosmic strings when the generalized Born-Infeld system is coupled with the Einstein equations. Different from the vortex solutions, there is an upper bound for the number of strings. In both cases, the magnetic flux, minimal energy, and total curvature (for cosmic stings) are proportional to the number of vortices and strings, respectively.

There are several directions that can be further explored. An unsettled problem in the Born-Infeld model is whether the solutions of the Euler-Lagrange equations (\ref{eq1}) and (\ref{eq2}) are also the solutions of the self-dual system (\ref{mainbps1}), (\ref{mainbps2}). One can also investigate the minimizing solution of the Born-Infeld energy in the case when the potential function $V(|\phi|)$ is beyond the special form (\ref{as0}). Moreover, inspired by recent studies in \cite{yang21, yang22}, it is interesting to study the Born-Infeld type model coupled with the Einstein equations and related black hole problems.

\textbf{Acknowledgment.} The author would like to thank Professor Yisong Yang, Professor Deane Yang, and Professor Edward Miller for many helpful discussions and advice.

\bibliographystyle{amsplain}

\end{document}